\theoremstyle{plain}
\newtheorem{Theorem}{Theorem}[section]
\newtheorem{Lemma}[Theorem]{Lemma}
\newtheorem{Corollary}[Theorem]{Corollary}
\theoremstyle{definition}
\theoremstyle{remark}
\newtheorem{Remark}{Remark}
\newcommand{\bdiag}{{\rm bdiag}}
\newcommand{\diag}{{\rm diag}}
\begin{document}
\title{Adaptive Regularized Zero-Forcing Beamforming \\ in Massive MIMO with Multi-Antenna Users}

 \author{
\name{Evgeny~Bobrov\textsuperscript{a,b,*}\thanks{Emails: eugenbobrov@ya.ru, roborisor@gmail.com, kuznetsov.victor@huawei.com, luhao12@huawei.com, minenkov.ds@gmail.com, stroshin@hse.ru, d.yudakov43@gmail.com, zaev.da@gmail.com.}, Boris~Chinyaev\textsuperscript{a,b}, Viktor~Kuznetsov\textsuperscript{b}, Hao~Lu\textsuperscript{b}, Dmitrii~Minenkov\textsuperscript{a,d}, Sergey~Troshin\textsuperscript{c}, Daniil~Yudakov\textsuperscript{a,b},  and Danila~Zaev\textsuperscript{b}}
\affil{\textsuperscript{a} M.V. Lomonosov Moscow State University, Russia;  \\
\textsuperscript{b} %
Huawei Technologies, Russian Research Institute, Moscow Research Center, Russia, \\
\textsuperscript{c} National Research University Higher School of Economics, Russia \\
\textsuperscript{d} A. Ishlinsky Institute for Problems in Mechanics RAS, Russia \\
\textsuperscript{*} Corresponding author 
}}

\maketitle

\begin{abstract}
Modern wireless cellular networks use massive multiple-input multiple-output (MIMO) technology. This technology involves operations with an antenna array at a base station that simultaneously serves multiple mobile devices which also use multiple antennas on their side. For this, various precoding and detection techniques are used, allowing each user to receive the signal intended for him from the base station. There is an important class of linear precoding called Regularized Zero-Forcing (RZF). In this work, we propose Adaptive RZF (ARZF) with a special kind of regularization matrix with different coefficients for each layer of multi-antenna users. These regularization coefficients are defined by explicit formulas based on Singular Value Decomposition (SVD) of user channel matrices. We study the optimization problem, which is solved by the proposed algorithm, with the connection to other possible problem statements. We prove theoretical estimates of the number of conditionality of the inverse covariance matrix of the ARZF method and the standard RZF method, which is important for systems with fixed computational accuracy. Finally, We compare the proposed algorithm with state-of-the-art linear precoding algorithms on simulations with the Quadriga channel model. The proposed approach provides a significant increase in quality with the same computation time as in the reference methods.
\end{abstract}

\begin{keywords}
Massive MIMO, Telecom, Optimization, MU-Precoding, Zero Forcing, SVD
\end{keywords}

\section{Introduction}
In multiple-input multiple-output (MIMO) systems with numerous antennas, precoding is an important part of downlink signal processing, since this procedure can focus the transmission signal energy on smaller areas and allows for greater spectral efficiency with less transmitted power~\cite{EE, 5G}. Various linear precodings allow directing the maximum amount of energy to the user as Maximum Ratio Transmission (MRT) or completely get rid of inter-user interference as Zero-Forcing (ZF)~\cite{ZF_MRT}. 
In the case of Regularized Zero-Forcing (RZF) algorithms (aka the Wiener filter), we balance between maximizing the signal power and minimizing the interference leakage~\cite{Joham_RZF,Bjornson,RZF19,RZF2,OptimalRegularization} and still have low complexity compared to the non-linear precoding. Note that usually a ``scalar'' regularization is considered, i.e. regularization with a scalar multiplied by a unitary matrix of the corresponding size.  E.~Bj\"ornson in ~\cite{Bjornson} uses the primal-dual approach and proves that the optimal regularization has the form of a diagonal matrix (generally speaking, with different elements). This proof is not constructive and does not provide any particular formula or algorithm for this optimal regularization. In the current paper, we propose an explicit heuristic formula for a diagonal regularization that provides better results compared with scalar RZF. We support the proposal with theoretical justification and tests using Quadriga~\cite{Quadriga}. 

There are also approaches for linear precoding, like Transceivers (detection-aware precoding, see e.g.~\cite{PrecodingDetection}), Block-diagonal precoding (see e.g.~\cite{BlockDiagonal}), and Power Allocation problem (see the involved study in~\cite{Bjornson_tb_17}).
There are also different non-linear precoding techniques such as Dirty Paper Coding (DPC) and Vector Perturbation (VP) but they have much higher implementation complexity~\cite{DPC} and in the case of usage of numerous antennas in massive MIMO, linear precoding techniques are more preferable. 
There also are many good surveys of precoding techniques~\cite{Survey2017,Survey2015, EZF19} and papers that consider different aspects and variations of these algorithms (see e.g.~\cite{Bogale, Inverses, RZF_mBS}). In particular, the work~\cite{RZF_mBS} presents various variants of RZF in the case of multiple base stations, which help to reduce inter-cell interference. 

Most works do not pay much attention to multi-antenna user equipment (UE) for simplicity. 
In our work, we consider a single base station serving multi-antenna users who simultaneously receive fewer data channels than their number of antennas. This approach is necessary because, in practice, the channels between different antennas of one UE are often spatial correlated~\cite{SpatialCorrelation}. Therefore, the matrix of the user channel is ill-conditioned (or even has incomplete rank), thus one can not efficiently transmit data using the maximum number of streams.
To solve this problem, instead of the full matrix of the user channel, vectors from its singular value decomposition (SVD) with the largest singular values are used for precoding~\cite{SVD}. In the case of UE with one antenna, the channel matrix can be normalized~\cite{RZF19} and normalization coefficients are the path losses of each UE that can differ by several orders, common Reference Signal Received Power (RSRP) values vary from $(-130) \, dBm$ to $ (-70) \, dBm$. When we use SVD, the singular values of different UEs have the order of the corresponding path-losses and therefore vary greatly. 

We managed to find a simple heuristic formula that provides gain over known RZF algorithms. Motivated by the above issue, we propose Adaptive RZF (ARZF) precoding with diagonal regularization. Algorithms of the Wiener filter type~\cite{Joham_RZF} have a scalar regularization of the form $\lambda \bm I$, and ARZF belongs to the narrow class of precoding which uses a diagonal regularization matrix. Such algorithms effectively use different regularization parameters for different UEs by taking into account the singular values of transmitted layers (streams).



The idea of ARZF is well-known, for example, in~\cite{Bjornson} it is shown that maximization of the SINR function (including the considered sum SE~\eqref{J_SE}) is achieved by an algorithm with appropriate diagonal regularization, and in~\cite{nguyen2014mmse} the authors derive a similar formula for the single-antenna MU-SIMO user system. 

The results of this paper are the adaptation of the formula $\bm W_{ARZF}(\bm V)$ together with Theorem~\ref{theorem_ARZF} for Multi-User (MU)-MIMO systems.

The rest of this paper is organized as follows. In Section~\ref{sec_model} we formulate the problem, which consists of the Channel and System Model, quality measures, and power constraints. The downlink MIMO channel model is simplified using SVD-decomposition of the channel (sec.~\ref{sec_SVD}) and useful idealistic detection (sec.~\ref{sec_ConjDet}); quality measures and power constraints are discussed in sec.~\ref{sec_QM},~\ref{sec_PC}. In Section~\ref{sec_precoding} we propose an adaptive precoding algorithm that utilizes UE singular values. Then, we study its relation with known precoding, including MRT, ZF, and RZF. Comparison of these algorithms on numerical experiments with Quadriga is provided in Section~\ref{sec_simulation}. Conclusions are drawn in Section~\ref{sec_conclusion}. Symbols and notations are shown in Tab.~\ref{table_example}.

Throughout the paper, we assume that the transmitter has perfect channel state information (CSI) of all downlink channels, this assumption is reasonable in time division duplex (TDD) systems, which allows the transmitter to employ reciprocity to estimate the downlink channels, and that each user only has access to their own CSI, but not the CSI of the downlink channels of the other users.

Throughout the paper we use the following notations. We consider one cell with $K$ UE, the number of transmit antennas is $T$ and the number of receive antennas and transmit symbols of UE $k$ are $R_k = 1,2,4$ and $L_k \leqslant R_k$ with total $R = \sum_{k=1}^K R_k$ and $L = \sum_{k=1}^K L_k$: $L_k \leqslant R_k \leqslant T$. By bold lower case letters we denote vectors: either columns or rows, which will be clear from the context. We denote matrices by bold upper case letters, considering them as sets of vectors, e.g. channel matrix is a set of vector-rows $\bm H = [\bm h_{1};...;\bm h_{R}] \in \mathbb C^{R\times T}$ and precoding matrix is a set of vector-columns $\bm W = (\bm w_1, ..., \bm w_L) \in \mathbb C^{T\times L}$. Matrix elements are denoted by ordinary lower case letters with the first index standing for rows and the second one -- for columns: $\bm H = \{h_{rt}\}, \; \bm W = \{w_{tl}\}, \; r=1,...,R, \; t = 1,...,T, \; l = 1,...,L$. Hermitian conjugate is denoted by $\bm H^ \mathrm H := \overline{\bm H}^ \mathrm T$. Diagonal and block-diagonal matrices are written as $\bm S_k = \diag\{s_{k,1},\dots,s_{k,R_k}\}$ and $\bm S = \bdiag\{\bm S_1,\dots,\bm S_K\}$ correspondingly, the identity matrix of size $T$ is $\bm I_T = \diag\{1,\dots, 1\} \in \mathbb C^{T\times T}$. Trace of a square matrix $\bm A$ is denoted by ${\rm tr} \bm A = \sum_{k=1}^K a_{kk}$, and Frobenius norm is $\|\bm H\| = \sqrt{\sum_{r=1,t=1}^{R, T}|h_{rt}|^2}$.


\section{Methods}
\subsection{Channel and System Model}\label{sec_model}

\begin{figure}
    \centering
    \includegraphics[width=0.65\linewidth]{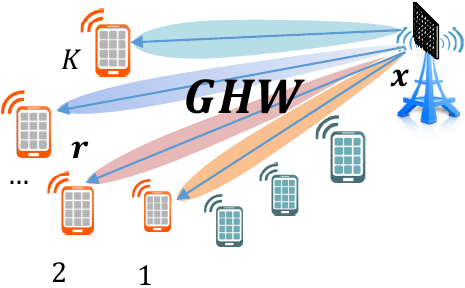}
    \caption{An example of the MIMO precoding usage. The problem is to find an optimal precoding matrix~$\bm W$ of the system given the target Spectral Efficiency (SE) function with the given constraints~\eqref{prob_phys}.}
    \label{fig:mimo_beams}
\end{figure}

\begin{table}
\caption{Symbols and notations\label{tab_notation}}
\label{table_example}
\centering
\begin{tabular}{c c}
\hline
\bf Symbols & \bf Notations\\
\hline\hline
${\bm  ()}^ \mathrm H$ & Complex conjugate operator\\
$\bm H, \bm W$ & Matrices\\
$\bm w_{n}$ & $n$-th column of matrix $\bm W$\\
$\bm h_{m}, \bm w^{m}$ & $m$-th row of matrices $\bm H, \bm W$\\
$h_{nm}, w_{nm}$ & $n,m$-th element of matrices $\bm H, \bm W$\\
$\bm S = \diag(s_1,...,s_N)$ & diagonal matrix  \\
$K$ & the number of users\\
$T$ & the number of transmit antennas\\
$R$ & the total number of receive antennas\\
$R_k$ & the number of receive antennas for each user \\
$L$ & the total number of layers in the system\\
$L_k$ & the number of layers for each user \\

\hline
\end{tabular}
\end{table}

According to~\cite{5G,Tse_tb_05,Bjornson_tb_17} we consider a MIMO broadcast channel. The Multi-User MIMO model is described using the following linear system:
\begin{equation}\label{model}
\bm r = \bm G ( \bm H \bm W \bm x + \bm n ) = \bm G  \bm H \bm W \bm x + \bm G \bm n,
\end{equation}
where $\bm x, \bm r \in \mathbb{C}^L $ are correspondingly \textit{transmitted and received vectors}, $\bm H \in \mathbb{C}^{R \times T}$ is a \textit{downlink channel matrix}, $\bm W \in \mathbb{C}^{T \times L}$ is a \textit{precoding matrix}, and $\bm G \in \mathbb{C}^{L \times R}$ is a block-diagonal \textit{detection matrix}; \textit{noise-vector} $\bm n \sim \mathcal{CN}(0, \sigma^2 \bm I_R) $ is assumed to be independent Gaussian with noise level $\sigma^2$ (Fig.~\ref{fig:mimo_beams}). Note that the linear precoding and detection are implemented by simple matrix multiplications. The constant $T$ is the number of transmit antennas, $R$ is the total number of receive antennas, and $L$ is the total number of transmitted symbols in the system. They are usually related as $L \leqslant R \leqslant T $. Each of the matrices $\bm G, \bm H, \bm W$ decomposes by $K$ \textit{users}: $\bm G = \bdiag\{\bm G_1, \dots, \bm G_K\}, \; \bm H = [\bm H_1; \dots; \bm H_K], \; \bm W = (\bm W_1, \dots, \bm W_K)$ as it is shown on fig.~\ref{Mimo Scheme}, here $\bm G_k\in\mathbb C^{L_k\times R_k}, \bm H_k\in\mathbb C^{R_k\times T}, \bm W_k \in\mathbb C^{T\times L_k}$ are matrices correspond to UE $k$.
    
    

\begin{figure}
  \centering
  \includegraphics[width=1\linewidth]{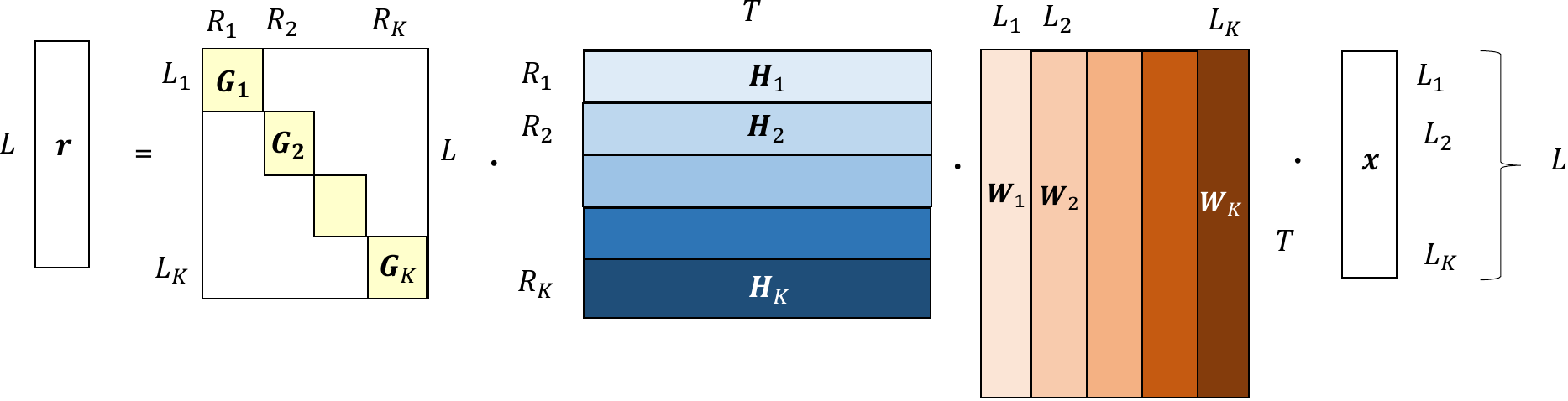}
  \caption{An example of the MIMO transmission system in the matrix form.  Multi-User precoding~$\bm W$ allows transmitting different information to various users simultaneously.}
  \label{Mimo Scheme}
\end{figure}

\subsubsection{Singular Value Decomposition of the Channel}\label{sec_SVD}
It is convenient~\cite{SVD} to represent the channel matrix of UE $k$ via its reduced Singular Value Decomposition (SVD):
\begin{equation}
    \label{User k SVD}
    \bm H_k = \bm U^ \mathrm H_k \bm S_k \bm V_k, \quad \bm U_k \bm U_k^ \mathrm H = \bm U_k^ \mathrm H \bm U_k = \bm I_{R_k}, \quad \bm S_k = \diag\{s_1,\dots,s_{R_k}\}, \quad \bm V_k \bm V^ \mathrm H_k = \bm I_{R_k}.
\end{equation}
Where the \textit{channel matrix} for user $k$, $\bm H_k \in \mathbb{C}^{R_k \times T}$ contains channel vectors $\bm h_i \in \mathbb{C}^{T}$ by rows, the singular values $\bm S_k \in \mathbb{C}^{R_k \times R_k}$ are sorted by descending, $\bm U_k \in \mathbb{C}^{R_k \times R_k}$ is a unitary matrix of left singular vectors, and matrix $\bm V_k \in \mathbb{C}^{R_k \times T}$ consists of \textit{right singular vectors} vector-rows

Collecting all users together, we may write the following \textit{channel matrix} decomposition: $\bm H = \bm U^ \mathrm H \bm S \bm V$ (Lemma~\ref{lem_decomposition} and Fig.~\ref{fig:svd_lemma}), where each of the decomposition matrices $\bm U^ \mathrm H \in \mathbb{C}^{R \times R}, \bm S \in \mathbb{C}^{R \times R}, \bm V \in \mathbb{C}^{R \times T}$ consists of $K$ submatrices of $\bm U_k^ \mathrm H \in \mathbb C ^ {R_k \times R_k}, \bm S_k \in \mathbb{C}^{R_k \times R_k}, \bm V_k \in \mathbb C ^ {R_k \times T}$ which consist of vectors $\bm u_l^ \mathrm H \in \mathbb C ^ {R_k}, s_l \in \mathbb R, \bm v_l \in \mathbb C ^ T$.

\begin{Lemma}[Main Decomposition]\label{lem_decomposition}

For the linear system $\bm r = \bm G ( \bm H \bm W \bm x + \bm n )$ there exists a matrix expansion $\bm H = \bm U^ \mathrm H \bm S \bm V$, 
    Where 
    $\bm S = \text{diag}\{\bm S_1, \dots, \bm S_K\} \in \mathbb{R}^{R \times R}_+$~--- diagonal matrix of singular numbers, $\bm V = [\bm V_1,\dots,\bm V_K] \in \mathbb{C}^{R \times T}$~--- matrix of general form,
    and the matrix $\bm U^ \mathrm H = \text{bdiag} \{\bm U_1^ \mathrm H, \dots, \bm U_K^ \mathrm H\} \in \mathbb{C}^{R \times R}$~--- \textit{block-diagonal and unitary}.

\end{Lemma}

\begin{figure}
    \centering
    \includegraphics[width=\linewidth]{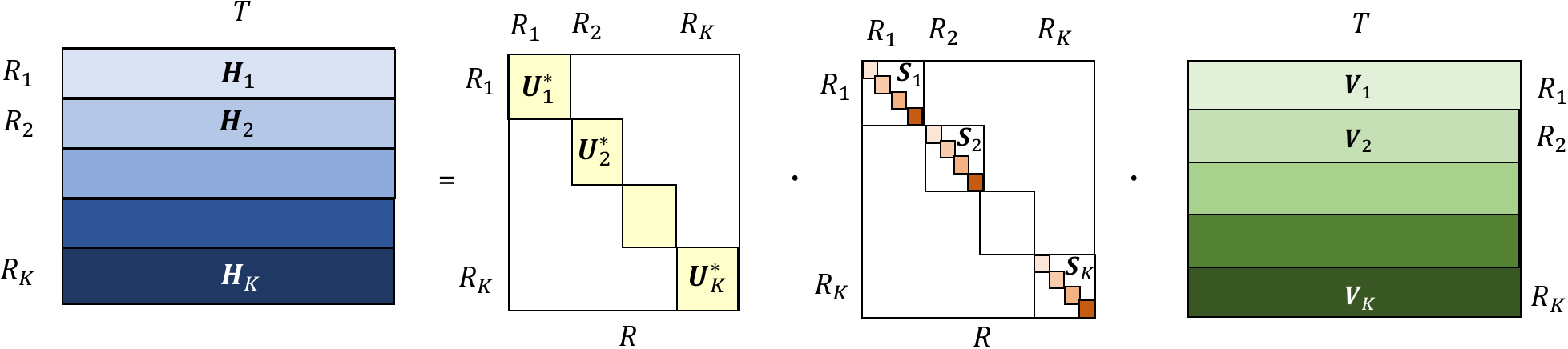}
    \caption{The main decomposition of the channel matrix.  }
    \label{fig:svd_lemma}
\end{figure}


Note that this representation is not actually SVD-decomposition of the matrix $\bm H$: vectors $\bm v_{k,j}, \bm v_{l,i}$ that correspond to different UE $k\neq l$ are not generally orthogonal. Nonetheless, this representation has important properties that make it useful: the matrix $\bm S = \diag(\bm S_k)\in \mathbb{C}^{R \times R}$ is a \textit{diagonal} matrix and $\bm U = \bdiag (\bm U_k) \in \mathbb{C}^{R \times R}$ is a \textit{block-diagonal unitary matrix}. This allows to compensate factor $\bm U^ \mathrm H \bm S$ by detection on UE side (each UE deals with its own $\bm U_k^ \mathrm H \bm S_k$). Thus, on the transmitter side, it is sufficient to invert only the matrix $\bm V$, which in itself is much simpler than the channel $\bm H$: first, its rows have a unit norm and, second, it is a natural object for Rank adaptation problem~\cite{RankSelection}. We heavily use the Lemma~\ref{lem_decomposition} in the following.

\subsubsection{Idea of Proposed Precoding Algorithm}

Let us briefly formulate the resulting algorithm. The intuitive idea of the proposed precoding method is that it should provide maximum signal power and eliminate the so-called inter-user interference. It is known that maximal signal power is provided by Maximum Ratio Transmission (MRT) precoding, $\bm W_{MRT}(\bm V) = \bm V^ \mathrm H$, and the interference is vanished by the Zero-Forcing (ZF) precoding, $\bm W_{ZF}(\bm V) = \bm V^ \mathrm H (\bm V \bm V^ \mathrm H)^{-1}$~\cite{ZF_MRT}. Both of these algorithms have disadvantages and can be improved by using Regularized Zero-Forcing (RZF) precoding $\bm W_{RZF}(\bm V) = \bm V^ \mathrm H (\bm V \bm V^ \mathrm H + \lambda \bm I)^{-1}$, where the regularization parameter $\lambda > 0$ depends on noise level and average path-losses~\cite{Joham_RZF}.

When users have significantly different path-losses, it is better to use regularization with a diagonal matrix. Non-scalar (matrix) regularization is introduced by E.~Bj\"ornson in~\cite{Bjornson}, where it is constructed using the primal-dual technique and demonstrated that the optimal regularization takes the shape of a diagonal matrix (generally speaking, with different elements). This proof is not constructive since it does not propose a specific formula or procedure for optimum regularization. In our work, we propose an explicit heuristic formula for diagonal regularization, Adaptive Regularized Zero-Forcing (ARZF) for the MU-MIMO system, as follows: $$\bm W_{ARZF} = \bm V^ \mathrm H (\bm V \bm V^ \mathrm H + \lambda \bm S^{-2})^{-1}.$$
Theoretical justification is given below (Sec.~\ref{sec_precoding}). Experiments with Quadriga~\cite{Quadriga} show that proposed ARZF outperforms known scalar RZF algorithms (see Sec.~\ref{sec_simulation}).

\subsubsection{Number of Transmitted Streams and Conjugate Detection}\label{sec_ConjDet}

In the previous section, we have introduced the notation of SVD (eq.~\ref{User k SVD}). Usually, the transmitter sends to UE several layers and the number of layers (rank) is less than the number of UE antennas ($L_k \leqslant R_k$). In this case, it is natural to choose for transmission the first $L_k$ vectors from ${\bm V}_k$ that correspond to the $L_k$ largest singular values from ${\bm S}_k$. 
Denote by $\widetilde{\bm S}_k \in \mathbb{C}^{L_k \times L_k}$ the first $L_k$ largest singular values from $\bm S_k$, and by  $\widetilde{\bm U}_k^ \mathrm H \in \mathbb C^{R_k \times L_k}, \; \widetilde {\bm V}_k \in \mathbb C^{L_k \times T}$ the first $L_k$ left and right singular vectors that correspond to $\widetilde{\bm S}_k$:
\begin{equation}
\widetilde{\bm S}_k = \diag\{s_{k,1}, \dots, s_{k,L_k}\}, \quad
\widetilde{\bm U}_k^ \mathrm H = (\bm u_{k,1}^ \mathrm H, \dots, \bm u_{k,L_k}^ \mathrm H), \quad
\widetilde{\bm V}_k = [\bm v_{k,1}; \dots; \bm v_{k,L_k}],
\end{equation}
where the wave sign $\widetilde{\cdot}$ denotes the contraction of singular vectors, i.e. ${\rm rank} \widetilde {\bm V}_k = L_k \leqslant R_k = {\rm rank} \bm V_k$.
Numbers $L_k$ (and particular selection of $\widetilde {\bm V}_k$) are defined during the Rank Adaptation problem that, along with Scheduler, is solved before precoding. 
For the Rank adaptation problem, we refer for example to~\cite{RankSelection} and in what follows we consider $L_k$, $\widetilde {\bm V}_k$ already chosen.

After precoding and transmission, on the UE $k$ side, we have to choose the detection matrix $\bm G_k \in \mathbb C^{L_k \times R_k}$ that takes into account UE Rank $L_k$. The way, how UE performs detection, heavily affects the total performance and different detection algorithms require different optimal precodings (see~\cite{Joham_RZF} where precoding is a function of the detection matrix). The best way would be to choose precoding and detection consistently but it is hardly possible due to the distributed nature of wireless communication. Nevertheless, there are ideas on how to adjust the precoding matrix assuming a particular way of detection on the UE side at the transmitter~\cite{PrecodingDetection}. We do not consider such an approach in this paper, though it can be used to further improve our main proposal.

To conduct analytical calculations, we assume the Conjugate Detection (CD)~\cite{bobrov2023power} in the following form:
\begin{equation}\label{Conjugate Detection}
       \bm G^C_k = \widetilde{\bm S}_k^{-1}  \widetilde{\bm U}_k \in \mathbb C ^ {L_k \times R_k} \Longleftrightarrow \bm G^C := \widetilde{\bm S}^{-1} \widetilde{\bm U} \in \mathbb{C}^{L \times R} ,
\end{equation}
where the unitary matrix $\widetilde{\bm U}_k \in \mathbb C ^{L_k \times R_k} $ contains the first $L_k$ singular vectors. The diagonal matrix $\widetilde{\bm S}_k \in \mathbb{C}^{L_k \times L_k}$ contains the first $L_k$ of the largest singular values. A block-diagonal unitary matrix $\widetilde{\bm U}$ consists of $\widetilde{\bm U}_k$ blocks. The diagonal matrix $\widetilde{\bm S}^{-1}$ consists of blocks $\widetilde{\bm S}_k^{-1}$. Finally, the block-diagonal detection matrix $\bm G^C$ consists of the blocks $\bm G^C_k$ on the main diagonal or, which is the same, the product of the matrices $ \bm G^C := \widetilde{\bm S}^{-1} \widetilde{\bm U} \in \mathbb{C}^{L \times R}$.

\begin{Theorem}\label{theorem_conj_det}
Conjugate Detection deletes unused singular vectors: $\bm G^C \bm H = \widetilde {\bm V}$, and the model equation~\eqref{model} takes the form
\begin{equation}\label{model_conj}
\bm r = \widetilde{\bm V} \bm W \bm x + \tilde{\bm n}, \quad 
\tilde {\bm n} := \widetilde{\bm S}^{-1} \widetilde{\bm U} \bm n.
\end{equation}
\end{Theorem}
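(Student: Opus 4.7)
The plan is a direct block-wise computation, applying the decomposition from Lemma~\ref{lem_decomposition} and the block structure of $\widetilde{\bm U}$, $\widetilde{\bm S}$, $\bm G^C$. Using $\bm H=\bm U^ \mathrm H \bm S \bm V$, the whole claim reduces to verifying the identity $\bm G^C \bm H = \widetilde{\bm V}$ on a per-user basis; once this is in hand, the second assertion is just substitution into~\eqref{model}.

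First I would unpack the block structure: $\bm G^C = \bd\{\widetilde{\bm S}_k^{-1}\widetilde{\bm U}_k\}$, and the product $\bm G^C \bm H$ splits into blocks $\widetilde{\bm S}_k^{-1}\widetilde{\bm U}_k \bm H_k$ for $k=1,\dots,K$. Substituting the per-user SVD $\bm H_k = \bm U_k^ \mathrm H \bm S_k \bm V_k$ gives
\begin{equation*}
\widetilde{\bm S}_k^{-1}\,\widetilde{\bm U}_k\,\bm U_k^ \mathrm H\,\bm S_k\,\bm V_k.
\end{equation*}
The next step is to observe that $\widetilde{\bm U}_k$ consists of the first $L_k$ rows of the unitary $\bm U_k$, so $\widetilde{\bm U}_k \bm U_k^ \mathrm H$ equals the first $L_k$ rows of $\bm U_k \bm U_k^ \mathrm H = \bm I_{R_k}$, i.e.\ the row-selector $[\bm I_{L_k}\mid \bm 0]\in\mathbb C^{L_k\times R_k}$. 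Multiplying by the diagonal $\bm S_k$ then picks out $[\widetilde{\bm S}_k\mid \bm 0]$, and multiplying by $\bm V_k=[\widetilde{\bm V}_k;\bm V_k^{\mathrm{tail}}]$ selects the top $L_k$ rows, giving $\widetilde{\bm S}_k\widetilde{\bm V}_k$. Finally, $\widetilde{\bm S}_k^{-1}$ cancels $\widetilde{\bm S}_k$ (its invertibility follows from $s_{k,j}>0$ for $j\le L_k$, which is implicit in the rank-adaptation choice), leaving $\widetilde{\bm V}_k$. Assembling the blocks yields $\bm G^C \bm H = \widetilde{\bm V}$.

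For the second assertion I would simply left-multiply~\eqref{model} by $\bm G^C$: linearity gives $\bm r = \bm G^C \bm H \bm W \bm x + \bm G^C \bm n$, and substituting the identity just proved together with the definition $\tilde{\bm n} := \widetilde{\bm S}^{-1}\widetilde{\bm U}\bm n$ produces~\eqref{model_conj} with no further work.

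I do not anticipate any genuinely hard step; the only place requiring a bit of care is keeping track of which matrix is ``first $L_k$ rows'' versus ``first $L_k$ columns'' (because of the Hermitian-conjugate convention in the definitions of $\widetilde{\bm U}_k^ \mathrm H$ and $\widetilde{\bm V}_k$) so that the row-selector identity $\widetilde{\bm U}_k \bm U_k^ \mathrm H = [\bm I_{L_k}\mid \bm 0]$ is applied to the correct side. Apart from that, the argument is a one-line block calculation.
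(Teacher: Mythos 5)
Your proof is correct and follows essentially the same route as the paper: the paper performs the identical computation $\bm G^C\bm H=\widetilde{\bm S}^{-1}\widetilde{\bm U}\bm U^ \mathrm H\bm S\bm V=\widetilde{\bm S}^{-1}[\,\bm I\mid\bm O\,]\bm S\bm V=\widetilde{\bm V}$ using the global block-diagonal matrices of Lemma~\ref{lem_decomposition}, whereas you carry it out block-by-block per user, which is the same argument written slightly more explicitly. The substitution into~\eqref{model} is handled identically.
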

\begin{proof}
Using Lemma~(\ref{lem_decomposition}) we can write 
\begin{align}&
    \bm G^C_k \bm H_k = \widetilde{\bm S}^{-1}_k \widetilde{\bm U}_k \bm U^ \mathrm H _ k \bm S _ k \bm V _ k = \widetilde{\bm S}^{-1} _ k \big[\begin{array}{c|c} \bm I _ {L_k} & \bm O \end{array}\big] \bm S _k \bm V _k  = \widetilde{\bm S}^{-1} _k \widetilde{\bm S} _k \widetilde{\bm V} _k = \widetilde{\bm V} _k,
\end{align}
which immediately leads to with combination of users $k=1 \dots K$~\eqref{model_conj}.
\end{proof}
\begin{Corollary}\label{cor_noise}
Under the assumption of independent Gaussian noise $\bm n \sim \mathcal {CN}(0,\sigma^2 \bm I_R)$ the distribution of the \textit{effective noise} vector $\tilde {\bm n} \in \mathbb C^{L\times 1}$ (that appears with Conjugate Detection) is $\tilde {\bm n} \sim \mathcal {CN}(0,\sigma^2 \tilde {\bm S}^{-2})$.
\end{Corollary}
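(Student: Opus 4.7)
The plan is to invoke the standard fact that a linear transformation $\bm A$ applied to a circularly symmetric complex Gaussian vector $\bm n \sim \mathcal{CN}(0, \Sigma)$ yields $\bm A \bm n \sim \mathcal{CN}(0, \bm A \Sigma \bm A^{\mathrm H})$. In our setting $\bm A = \widetilde{\bm S}^{-1} \widetilde{\bm U}$ and $\Sigma = \sigma^2 \bm I_R$, so the mean is trivially zero and the whole task reduces to simplifying the covariance
\begin{equation*}
\mathrm{Cov}(\tilde{\bm n}) = \sigma^2\, \widetilde{\bm S}^{-1} \widetilde{\bm U}\, \widetilde{\bm U}^{\mathrm H} \widetilde{\bm S}^{-\mathrm H}.
\end{equation*}

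First I would argue that $\widetilde{\bm U}\widetilde{\bm U}^{\mathrm H} = \bm I_L$. By construction $\widetilde{\bm U}$ is block-diagonal with blocks $\widetilde{\bm U}_k \in \mathbb C^{L_k\times R_k}$ consisting of the first $L_k$ rows of the unitary matrix $\bm U_k$ from~\eqref{User k SVD}. Unitarity of $\bm U_k$ implies that these rows are orthonormal, hence $\widetilde{\bm U}_k \widetilde{\bm U}_k^{\mathrm H} = \bm I_{L_k}$, and assembling the blocks gives $\widetilde{\bm U}\widetilde{\bm U}^{\mathrm H} = \bdiag\{\bm I_{L_1},\dots,\bm I_{L_K}\} = \bm I_L$.

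Then I would use that $\widetilde{\bm S}$ is a real positive diagonal matrix (its entries are singular values), so $\widetilde{\bm S}^{-\mathrm H} = \widetilde{\bm S}^{-1}$ and $\widetilde{\bm S}^{-1}\bm I_L \widetilde{\bm S}^{-1} = \widetilde{\bm S}^{-2}$. Plugging back gives $\mathrm{Cov}(\tilde{\bm n}) = \sigma^2 \widetilde{\bm S}^{-2}$, which is the claimed distribution.

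There is no real obstacle here; the only thing to be careful about is the distinction between $\widetilde{\bm U}\widetilde{\bm U}^{\mathrm H} = \bm I_L$ (true, because we project onto orthonormal rows) and $\widetilde{\bm U}^{\mathrm H}\widetilde{\bm U}$ (which equals $\bm I_R$ only when $L_k = R_k$ for every $k$). Since Conjugate Detection applies $\widetilde{\bm U}$ on the left of $\bm n$, only the former identity is needed, and the proof is essentially a one-line covariance computation.
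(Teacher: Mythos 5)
Your proof is correct and follows essentially the same route as the paper: both reduce the claim to the covariance computation $\sigma^2\widetilde{\bm S}^{-1}\widetilde{\bm U}\widetilde{\bm U}^{\mathrm H}\widetilde{\bm S}^{-1}=\sigma^2\widetilde{\bm S}^{-2}$, using $\widetilde{\bm U}\widetilde{\bm U}^{\mathrm H}=\bm I_L$ and the fact that $\widetilde{\bm S}$ is real diagonal. Your explicit justification of $\widetilde{\bm U}\widetilde{\bm U}^{\mathrm H}=\bm I_L$ via the orthonormal rows of each block, and the caution about not confusing it with $\widetilde{\bm U}^{\mathrm H}\widetilde{\bm U}$, is a welcome bit of extra care that the paper leaves implicit.
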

\begin{proof}
\begin{multline}
{\mathbb E} [\tilde {\bm n} \tilde {\bm n}^ \mathrm H] = 
{\mathbb E} [\tilde{\bm S}^{-1} \tilde {\bm U} {\bm n} {\bm n}^ \mathrm H \tilde {\bm U}^ \mathrm H \tilde {\bm S}^{-1}] = \\ =
\tilde{\bm S}^{-1} \tilde {\bm U}  {\mathbb E} [ {\bm n} {\bm n}^ \mathrm H ] \tilde {\bm U}^ \mathrm H \tilde {\bm S}^{-1} =
\tilde{\bm S}^{-1} \tilde {\bm U}  \sigma^2 \bm I_R \tilde {\bm U}^ \mathrm H \tilde {\bm S}^{-1} = 
\sigma^2 \tilde{\bm S}^{-1} \bm I_L \tilde{\bm S}^{-1} =
\sigma^2 \tilde {\bm S}^{-2}. 
\end{multline}
\end{proof}

\begin{Remark}\label{rem_noise} 
The Corollary~\ref{cor_noise} is essential for our study, since it gives the idea to use $\widetilde{\bm S}^{-2} \in \mathbb C ^ {L \times L}$ in regularization part of precoding to take into account the correct effective noise $\widetilde{\bm S}^{-1} \widetilde{\bm U} \bm n$.
\end{Remark}
\begin{Remark}\label{Reducing to Layers}
The formulated Theorem sufficiently simplifies the initial problem, decreases its dimensions, and allows notation to be uniform. Namely, we can work with \textit{user layers} of shapes $L_k$ and $L$ instead of considering \textit{user antennas} space.
Note also that it is sufficient to only perform Partial SVD of the channel $\bm H_k \in \mathbb C ^ {R_k \times T}$, keeping just the first $L_k$ singular values and vectors for each user $k$ such as 
\begin{equation}
\bm H_k \approx \widetilde{\bm U}^ \mathrm H_k \widetilde{\bm S}_k \widetilde{\bm V}_k.
\end{equation}
Based on this, in what follows we omit the tilde and write $\bm U_k, \bm S_k, \bm V_k$ instead of $\widetilde{\bm U}_k, \widetilde{\bm S}_k, \widetilde{\bm V}_k$ correspondingly.
\end{Remark} 
\begin{Remark}
The introduced Conjugate Detection is ``ideal'' and can not be implemented in practice. However, one can show that realistic detection policies as MMSE or IRC detection~\cite{IRC} often behave similarly to Conjugate Detection. In simulations, we use MMSE detection to compare various precoding methods. 
\end{Remark}


\subsubsection{Known MMSE Detection}

Signal detection aims at pseudo-reversing the product of the channel and precoding matrices. The most common form of the $\bm G$ detection matrix is studied, which are called minimum MSE (MMSE)~\cite{MMSE}. In statistics and signal processing, MMSE~--- is an estimation method that minimizes the MSE function, which in turn is a general measure of the quality of the estimation of fitted values of the dependent variable. 

The definition of MMSE detection implements the following rule:
\begin{equation}\label{MMSE_1}
    \bm G^{\textit{MMSE}}_k(\bm A_k) = \bm A_k ^ \mathrm H \left(\bm A_k \bm A_k^ \mathrm H + \sigma^2 \bm I\right)^{-1}, \quad \bm A_k = \bm H_k \bm W_k.
\end{equation}
The parameter $P$~--- is the base station power, and $\sigma^2$~--- the system noise. The MMSE detection method seeks to eliminate noise by assuming it is the same for all symbols: $\bm n \sim \mathcal{CN}(0,\sigma^2 \bm I_R)$, which may be violated in practice.

\begin{Lemma}
    For system~\eqref{model}: $\bm r_k = \bm G_k \bm H_k \bm W \bm x + \bm G_k \bm n_k \in \mathbb{C}^{L_k}$ with noise distribution $\bm n \sim \mathcal{CN}(0, \sigma^2 \bm I_R)$, and precoding $\bm W = \bm H^+ = \bm H^* (\bm H \bm H^*)^{-1}$, matrix~\eqref{MMSE_1}: $\bm G^{\textit{MMSE}}$ minimizes the square of the norm: $ \mathbb E_{\bm n, \bm x} \|\bm r_k - \bm x \|^2$, $ k=1 \dots K$.
\end{Lemma}
\begin{proof}
    
    Let's substitute the system expression $\bm r_k = \bm G_k \bm H_k \bm W \bm x + \bm G_k \bm n_k$ into the loss function $\mathbb E_{\bm n \sim \mathcal{CN}(0,\sigma^2 \bm I_R)} \|\bm r_k - \bm x \|^2 = \mathbb E_{\bm n \sim \mathcal{CN}(0,\sigma^2 \bm I_R)}\|\bm G_k \bm H_k \bm W \bm x - \bm x + \bm G_k \bm n_k \|^2$.

The second summand will go to zero from introducing the expectation of noise with zero mean, and the third summand will go to zero 

     Let's open the brackets using the sum norm squared formula: 
    \begin{multline}
        \mathbb E_{\bm n \sim \mathcal{CN}(0,\sigma^2 \bm I_R)} \|(\bm G_k \bm H_k \bm W - \bm I) \bm x + \bm G_k \bm n_k\| ^ 2  = \|(\bm G_k \bm H_k \bm W - \bm I) \bm x\|^2 + \\ + 2 \mathbb E_{\bm n \sim \mathcal{CN}(0,\sigma^2 \bm I_R)} \Re \{   < \bm G_k \bm H_k \bm W \bm x - \bm x, \bm G_k \bm n_k > \} +  \mathbb E_{\bm n \sim \mathcal{CN}(0,\sigma^2 \bm I_R)} \| \bm G_k \bm n_k \|^2 = \\ = \|(\bm G_k \bm H_k \bm W - \bm I) \bm x\|^2 + 2  \Re \{   < \bm G_k \bm H_k \bm W \bm x - \bm x, \bm G_k \underbrace{\mathbb E_{\bm n \sim \mathcal{CN}(0,\sigma^2 \bm I_R)} \bm n_k >}_{=0} \} + \sigma^2 \| \bm G_k \|^2 = \\ = \|(\bm G_k \bm H_k \bm W - \bm I) \bm x\|^2 + \sigma^2 \| \bm G_k \|^2
    \end{multline}
    Next, applying expectation over the symbols $\bm x$ with the condition $\mathbb E _ {\bm x \sim \mathcal{CN}(0, \bm I_L)} \bm x \bm x^* = \bm I$: 
    \begin{multline}
        \mathbb E _ {\bm x \sim \mathcal{CN}(0, \bm I_L)} \|(\bm G_k \bm H_k \bm W - \bm I) \bm x\|^2 + \sigma^2 \| \bm G_k \|^2 = \|\bm G_k \bm H_k \bm W - \bm I \| ^ 2 + \sigma^2 \| \bm G_k \|^2 \rightarrow \min\limits_{\bm G_k}
    \end{multline}

    If the conditions are met: $ \bm H_k \bm W = \{ \bm W = \bm H^* (\bm H \bm H^*)^{-1} \} = \bm H_k \bm W_k$:
    \begin{equation}\label{eq:upper_bound_mse_2}
        \|\bm G_k \bm H_k \bm W_k - \bm I \| ^ 2 + \sigma^2 \| \bm G_k \|^2 \rightarrow \min\limits_{\bm G_k}
    \end{equation}

    Let's calculate the gradient of the function~\eqref{eq:upper_bound_mse_2} and equate it to zero:
    \begin{multline}
            \nabla_{\bm G_k} \{ \|\bm G_k \bm H_k \bm W_k - \bm I \| ^ 2 + \sigma^2 \| \bm G_k \|^2 \} =  2(\bm G_k \bm H_k \bm W_k - \bm I)(\bm H_k \bm W_k)^ \mathrm H + 2\sigma^2 \bm G_k = \\ =
            2\bm G_k (\bm H_k \bm W_k) ( \bm H_k \bm W_k )^ \mathrm H - 2( \bm H_k \bm W_k )^ \mathrm H + 2\sigma^2 \bm G_k = 0 \\
            \bm G_k ((\bm H_k \bm W_k) ( \bm H_k \bm W_k )^ \mathrm H + \sigma^2 \bm I ) = ( \bm H_k \bm W_k )^ \mathrm H \\
            \widehat{\bm G}^{\textit{MMSE}}_k = \bm G_k = ( \bm H_k \bm W_k )^ \mathrm H ((\bm H_k \bm W_k) ( \bm H_k \bm W_k )^ \mathrm H + \sigma^2 \bm I )^{-1}
    \end{multline}

    We get the desired solution~\eqref{MMSE_1}.

\end{proof}

\subsubsection{Quality Measures}\label{sec_QM}


The following functions are used to measure the quality of the precoding methods. These functions are based not on the actual sending symbols $\bm x \in \mathbb C ^ {L\times 1}$, but some distribution of them~\cite{Bjornson}. Thus, we get the common function for all assumed symbols, which can be sent using the specified precoding matrix.

Let us consider the end-to-end numbering of symbols $l= 1,...,L$ for all UE and index function $k = k(l)$ that returns the index of UE that receives the symbol $l$.  
The \textit{Signal-to-Interference-and-Noise} functional of the $l$-th symbol of user $k = k(l)$ is defined as: 
\begin{equation}\label{Symbol SINR}
\textit{SINR}_l(\bm W, \bm H_k, \bm g_l, \sigma^2) := \dfrac{|\bm g_l \bm H_k \bm w_l |^2}{\sum_{i \ne l}^{L} | \bm g_l \bm H_k \bm w_i |^2 + \sigma^2 \|\bm g_l\|^2}.
\end{equation}

The formula (\ref{Symbol SINR}) shows the ratio between the useful and harmful parts of the signal. It depends on the whole precoding matrix $\bm W \in \mathbb C ^ {T \times L}$, where the complex vector $\bm w_l \in \mathbb C ^ {T\times 1}$ denotes the precoding for the $l$-th symbol, on the channel matrix $\bm H_k \in \mathbb C ^ {R_k \times T}$ of the $k$-th user, the detection vector $\bm g_l \in \mathbb C ^ {1\times R_k}$ of the $l$-th symbol; after detection noise $\bm n_k \in\mathbb C^{R_k\times 1}$ level of UE $k$ becomes $\mathbb E[\bm g_l \bm n] = \sigma^2 \|\bm g_l\|^2$. 
The formula (\ref{Symbol SINR}) can be efficiently computed for all $L$ layers using several matrix multiplications and summations.
It can be further simplified, using Theorem~\ref{theorem_conj_det}.
\begin{Corollary}
For given SVD~\eqref{User k SVD} and assuming Conjugate Detection~\eqref{Conjugate Detection} the following formula for $\textit{SINR}$ holds:
    \begin{equation}\label{SINR_conj}
    \textit{SINR}_l^{C}(\bm W, \bm v_l, s_l, \sigma^2) = \textit{SINR}_l(\bm W, \bm H_k, \bm g_l^C, \sigma^2) =  \frac{ | \bm v_l \bm w_l  |^2 }{\sum_{i \ne l}^L| \bm v_l \bm w_i |^2  +  \sigma^2 / s_l^{2}}.
    \end{equation}
\end{Corollary}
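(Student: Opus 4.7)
The plan is to directly substitute the Conjugate Detection vector $\bm g_l^C$ into the general SINR formula~\eqref{Symbol SINR} and exploit the identity $\bm G^C \bm H = \widetilde{\bm V}$ established in Theorem~\ref{theorem_conj_det}. Since $\bm G^C = \widetilde{\bm S}^{-1}\widetilde{\bm U}$ is block-diagonal across users, the $l$-th row of $\bm G^C$ lies in the $k(l)$-th block and equals $\bm g_l^C = s_l^{-1}\bm u_l$, where $\bm u_l$ is the corresponding row of $\widetilde{\bm U}_{k(l)}$. Reading off the $l$-th row of both sides of $\bm G^C\bm H = \widetilde{\bm V}$ gives $\bm g_l^C \bm H_{k(l)} = \bm v_l$, which immediately converts both the signal term $|\bm g_l \bm H_k \bm w_l|^2$ and every interference term $|\bm g_l \bm H_k \bm w_i|^2$ in the numerator/denominator of~\eqref{Symbol SINR} into $|\bm v_l \bm w_l|^2$ and $|\bm v_l \bm w_i|^2$ respectively.

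Next I would handle the noise term $\sigma^2\|\bm g_l\|^2$. Since the rows of $\widetilde{\bm U}_{k(l)}$ are orthonormal, $\|\bm u_l\|=1$, and thus $\|\bm g_l^C\|^2 = \|s_l^{-1}\bm u_l\|^2 = s_l^{-2}$. Substituting into~\eqref{Symbol SINR} yields exactly $\sigma^2/s_l^2$ in the denominator. Combining these two substitutions produces the claimed formula~\eqref{SINR_conj}.

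I do not anticipate any real obstacle: the corollary is essentially a bookkeeping consequence of Theorem~\ref{theorem_conj_det} together with Corollary~\ref{cor_noise}. The only mild subtlety is keeping the indexing straight — making sure that $\bm g_l^C$ is understood as a row of the block $\bm G^C_{k(l)}$ acting on $\bm H_{k(l)}$, rather than of the full $\bm G^C$ acting on $\bm H$ — but once the block-diagonal structure is noted this is transparent. A one- or two-line display carrying the substitutions through $|\bm g_l^C \bm H_{k(l)} \bm w_i|^2 = |\bm v_l \bm w_i|^2$ and $\|\bm g_l^C\|^2 = s_l^{-2}$ is all that is needed.
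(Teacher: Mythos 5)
Your proposal is correct and follows exactly the route the paper intends: the paper states this corollary without a written proof, merely citing Theorem~\ref{theorem_conj_det}, and your substitution $\bm g_l^C = s_l^{-1}\bm u_l$, giving $\bm g_l^C\bm H_{k(l)} = \bm v_l$ and $\|\bm g_l^C\|^2 = s_l^{-2}$ via orthonormality of the rows of $\widetilde{\bm U}_{k(l)}$, is precisely the bookkeeping that justification requires. No gaps.
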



The important criterion of network performance is the \emph{spectral efficiency} $\textit{SE}_k$ of UE $k$, which refers to the information rate that can be transmitted over a given bandwidth for a certain UE. In the case of one symbol $L_k=1$ it is bounded by Shannon's entropy that depends on $\textit{SINR}$ in the following way:
\begin{align}\label{f:sek}
\mathcal S (\textit{SINR}) :=  \log_2 (1 + \textit{SINR}).
\end{align}
This is theoretically supreme for the possible transmission Rate,  
and recent modulation and coding schemes along with Hybrid Automatic Repeat Request (HARQ) retransmission and Block Error Rate (BLER) management allow us to achieve a Rate very close to Shannon's entropy. Note that $\textit{SINR}$ here is taken in linear values, not in dB.

In the case of several transmitted symbols, generally speaking, $\textit{SE}_k$ of UE $k$ is not a sum over its layers because there is one common transport block to be transmitted and thus the coding and modulation algorithms are also common. Usually, one would introduce an \emph{effective} \textit{SINR} as a function of per-symbol \textit{SINR}:
$\textit{SINR}_k^{eff} = f(\textit{SINR}_1, \dots, \textit{SINR}_{L_k})$, and so using~\eqref{f:sek} we obtain:
\begin{align}& \label{f:SE}
\textit{SE}_k(\bm W,\bm H_k, \bm G_k,\sigma^2) = L_k \; \mathcal S \Big(\textit{SINR}_k^{eff}(\bm W,\bm H_k, \bm G_k,\sigma^2)\Big).
\end{align}
There are different approaches to estimate this effective SINR for QAM64 and QAM256 (see~\cite{effectiveSINR}) and in simulations, we use the QAM256 model. Also, we consider approximate formula using the geometric mean of per-symbol SINR. That is the same as the usual average of per-symbols in dB. This heuristic approximation will be used later in numerical optimization by gradient search:
\begin{equation}\label{f:SINReff}
\textit{SINR}_k^{eff}(\bm W, \bm H_k, \bm G_k, \sigma^2) \approx \Big({\prod\nolimits_{l \in \mathcal{L}_k} \textit{SINR}_l(\bm W, \bm H_k, \bm g_l, \sigma^2) } \Big)^{\frac{1}{L_k}}.
\end{equation}


The most general problem statement is the multi-criteria optimization of the whole vector $(\textit{SE}_1,...,\textit{SE}_K)$. For such a problem, the Pareto optimality can be studied (see~\cite{Pareto1,Pareto2}), which is hard and does not provide a unique solution, that is why usually a suitable decomposition into one-criterial optimization is considered: $J = J(\textit{SE}_1,...,\textit{SE}_K) \to \max_{}$ or $J = J(\textit{SINR}_1,...,\textit{SINR}_K)$~\cite{Bjornson}. Such decomposition can be done in different ways, we consider the sum of Spectral Efficiencies~\eqref{f:SE} over all UEs:
\begin{equation}\label{J_SE}
J^{SE}(\bm W) := \textit{SE}(\bm W,\bm H, \bm G,\sigma^2) = \sum_{k=1}^K \textit{SE}_k(\bm W,\bm H_k, \bm G_k,\sigma^2).
\end{equation}
Such criterion is natural as UE Rates are additive values. There are other possible targets, e.g. performance of cell edge UE (CEU). In~\cite[sec.~7]{Bjornson_tb_17} interested readers can find Pareto analysis of the multi-criteria statement and comparison of several target functions, including
\begin{align}\label{Min SE}
\textit{SE}_{\min} = \min_k \textit{SE}_k \rightarrow \max{} \quad \text{or} \quad 
\textit{SINR}_{\min} = \min_{1\leqslant j\leqslant L} \textit{SINR}_j \rightarrow \max{}.
\end{align}


Finally, we consider Single-User \textit{SINR} for the $k$-th user and the average of \textit{SU SINR} is an important parameter in simulations:
\begin{equation}
    \textit{SUSINR}_k({\bm S}_k, \sigma^2, P) :=  \frac{P}{L_k\sigma^2} \bigg(\prod\nolimits_{l \in \mathcal{L}_k} s_l^2 \bigg)^{\frac{1}{L_k}}, \quad
\end{equation}
\begin{equation}
    \textit{AvSUSINR}(\bm S, \sigma^2, P) := \bigg(\prod_{k=1}^K \textit{SUSINR}_k({\bm S}_k, \sigma^2, P)\bigg)^{\frac 1 K}.  
    \label{SU SINR}
\end{equation}
The formula (\ref{SU SINR}) reflects the quality of the channel (geometrical average of per-symbol \textit{SINR}) for the specified UE without taking into account other users. It depends on the greatest $L_k$ singular values ${\bm S}_k \in \mathbb R ^ {L_k \times L_k}$ of the $k$-th user channel matrix $\bm H_k \in \mathbb{C}^{R_k \times T}$ and may be derived from the (\ref{Symbol SINR}) and (\ref{f:SINReff}) formulas assuming Single-User case, MRT or ZF precoding matrix and Conjugate Detection (\ref{Conjugate Detection}). We will use this function in our experiments as a universal channel characteristic, including system noise $\sigma^2$ and station power $P$.

\subsubsection{Problem Statement and Power Constraints} \label{sec_PC}

First of all, we assume that the total channel $\bm H$, the number $K$ of UE, and their ranks $L_k$ are known given values. This means that the Scheduler problem (which UE are to be served from the set of active UE) and Rank adaptation problem (which rank is provided to each UE) are already solved. This is usually the case in real networks. Scheduler and Rank adaptation problems are complicated and important radio resource management problems themselves but are out of the scope of this study (for example of Scheduler problem we refer to~\cite{Scheduler} and bibliography within, for Rank adaptation,~\cite{RankSelection} and~\cite{PrecodingDetection}.
These algorithms affect the properties of matrix $\bm H$, e.g. Scheduler can choose only UE with small enough correlations $\|\bm C \| = \|\bm V \bm V^ \mathrm H - \bm I_R\| \leqslant \varepsilon$. We take this into account and consider scenarios with small correlations of UE channels.

Next, we consider the channel model in the form~\eqref{model} that particularly means exact measurements of the channel. To further simplify the problem we suppose detection policy $\bm G = \bm G(\bm H, \bm W)$ to be a known function, moreover we assume Conjugate Detection~\eqref{Conjugate Detection} that simplifies channel model to~\eqref{model_conj}. Based on this channel model we calculate \textit{SINR} of transmitted symbols by~\eqref{SINR_conj} and effective \textit{SINR} of UE, which can be approximately calculated by~\eqref{f:SINReff}.

Finally, denote the total power of the system as $P$ and that the sent vector has unit norm $\mathbb{E}[\bm x \bm x^ \mathrm H] = \bm I_L$. The \textit{total power constraints}
and the more realistic \textit{per-antenna power constraints} (see~\cite{Bjornson_tb_17}) impose the following conditions on the precoding matrix:
\begin{equation}\label{power_antenna}
\|\bm W\|^2 \leqslant P, \qquad \text{or} \qquad 
\|(w_{t1},\dots,w_{tL})\|^2 \leqslant P/T, \quad t=1,...,T.
\end{equation}

{\bf The ultimate goal is to find a precoding matrix that maximizes sum Spectral Efficiency~\eqref{f:SE}
subject to power constraints~\eqref{power_antenna}}, e.g.: 
\begin{equation}\label{prob_phys}
J^{SE}(\bm W) := SE^C(\bm W), \quad
\bm W = {\rm argmax}_{\bm W} J^{SE}(\bm W), \quad {\rm s.t.:} \;\; \|\bm W\|^2\leqslant P.
\end{equation}

Even after all the above simplifications, the formulated problem is too complicated to solve analytically. Moreover, it is not convex or concave, hence it could have a lot of (essentially) different local maximums. That's why our strategy in this paper is to propose some heuristic formula that will prove to be better than known algorithms on some reliable simulations. 
After defining the particular form of precoding (anzats) $\bm W^0 = \bm W^0(\bm V, \bm S, \bm n)$ we can always satisfy power constraints by normalizing constant, e.g. for~\eqref{power_antenna}:
\begin{equation}\label{mu}
\bm W = \mu \bm W^0, \quad    
\mu = \dfrac{\sqrt{P}}{\|\bm W^0\|} \quad {\rm or} \quad \mu = \dfrac{\sqrt{P/T}}{\max\limits_{t=1,...,T} \{\|(w_{t1},\dots,w_{tL})\| \} }.
\end{equation}
In simulations, we use more realistic per-antenna power constraints.

A heuristic algorithm is formulated below and its idea is hinted at by formulated simplifications in Corollary~\ref{cor_noise} and Remark~\ref{rem_noise}. Theoretically, it is supported by the model problem of MSE minimization (see e.g.~\cite{Joham_RZF}):
\begin{equation}\label{prob_constr}
J^{MSE}(\bm W) := \mathbb E_{\bm x, \bm n} [\|\bm r(\bm W) - \bm x\|^2], \quad
\bm W = {\rm argmin}_{\bm W} J^{MSE}(\bm W), 
\quad {\rm s.t.:}\;\; \|\bm W\|^2 \leqslant P,
\end{equation}
where $\bm r(\bm W)$ is given by channel model~\eqref{model} or~\eqref{model_conj}.

\subsection{Reference Precoding Methods}\label{sec_precoding}

We repeat known reference precoding algorithms and present our solution.

\subsubsection{Maximum Ratio Transmission (MRT)}

A Maximum Ratio Transmission precoding algorithm takes single-user weights $\bm V^ \mathrm H$ from the SVD-decomposition. This approach leads to the maximization of single-user power, ignoring the interference. The MRT approach is preferred in noisy systems when the noise power is higher than inter-user interference~\cite{ZF_MRT}: 
\begin{equation}
\bm W_{MRT}(\bm V) = \mu \bm V^ \mathrm H,
\end{equation}
where normalizing constant $\mu$ is defined from power constraint~\eqref{mu}.
This algorithm outputs a set of interfering channels assuming the model~\eqref{model_conj}:
$$\bm r = \bm V \bm W \bm x + \bm S^{-1} \bm U \bm n = \mu \bm V \bm V^ \mathrm H  \bm x + \bm S ^{-1} \bm U \bm n.$$



\subsubsection{Zero-Forcing (ZF)}

The next modification of the precoding algorithm performs decorrelation of the symbols using the inverse correlation matrix of the channel vectors. Such precoding construction sends the signal beams to the users without creating any interference between them. Different from the MRT method, the Zero-Forcing approach is preferred when the potential inter-user interference is higher than the noise power, and the Spectral Efficiency quality improves by eliminating this interference~\cite{ZF_MRT}.
\begin{equation}\label{Zero-Forcing}
    \bm W_{ZF}(\bm V) = \mu \bm V^\dagger = \mu \bm V^ \mathrm H (\bm V \bm V^ \mathrm H)^{-1}.
\end{equation}

The following receiver model is:     $$\bm r =  \bm V \bm W \bm x + \bm S^{-1} \bm U \bm n =  \mu  \bm V \bm V^ \mathrm H (\bm V \bm V^ \mathrm H)^{-1}  \bm x + \bm S^{-1} \bm U \bm n = \mu \bm x + \bm S^{-1} \bm U \bm n$$






Denote $\bm F = \bm U \bm H = \bm S \bm V$.
\begin{Theorem}\label{theorem_ZF}
Assume the channel matrix has the form of $\bm H = \bm U^ \mathrm H \bm S \bm V$ (Lemma~\ref{lem_decomposition}). Then, the following relation for Zero-Forcing holds:
\begin{equation}\label{ZF-ZF}
\bm W_{ZF}(\bm F) \bm S = \bm W_{ZF}(\bm V)
\end{equation}
\end{Theorem}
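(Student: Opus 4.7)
The plan is to verify the identity by direct substitution, relying only on the algebraic structure of the ZF pseudoinverse and elementary properties of the diagonal matrix $\bm S$.

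First I would expand $\bm W_{ZF}(\bm F)$ using the definition in~\eqref{Zero-Forcing} applied to $\bm F = \bm S \bm V$, temporarily dropping the normalizing constant $\mu$ since the claim is about the structural relation between the two expressions (the scalar $\mu$ can be absorbed afterward). This gives
\begin{equation*}
\bm W_{ZF}(\bm F) = \bm F^{\mathrm H}(\bm F \bm F^{\mathrm H})^{-1} = (\bm S \bm V)^{\mathrm H}\bigl((\bm S \bm V)(\bm S \bm V)^{\mathrm H}\bigr)^{-1}.
\end{equation*}

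Next, I would use that $\bm S$ is a real diagonal matrix (its entries are the singular values from Lemma~\ref{lem_decomposition}), so $\bm S^{\mathrm H}=\bm S$, and that $\bm S$ is invertible after the Partial SVD truncation of Remark~\ref{Reducing to Layers} (all retained singular values are strictly positive). Then $(\bm S \bm V)^{\mathrm H} = \bm V^{\mathrm H}\bm S$ and $\bm F \bm F^{\mathrm H} = \bm S\,(\bm V \bm V^{\mathrm H})\,\bm S$. Applying the standard inverse-of-product rule for square invertible matrices,
\begin{equation*}
(\bm F \bm F^{\mathrm H})^{-1} = \bm S^{-1}(\bm V \bm V^{\mathrm H})^{-1}\bm S^{-1}.
\end{equation*}

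Substituting back and cancelling the inner $\bm S \bm S^{-1} = \bm I$ yields $\bm W_{ZF}(\bm F) = \bm V^{\mathrm H}(\bm V \bm V^{\mathrm H})^{-1}\bm S^{-1}$. Multiplying both sides on the right by $\bm S$ collapses the trailing factor and produces exactly $\bm V^{\mathrm H}(\bm V \bm V^{\mathrm H})^{-1} = \bm W_{ZF}(\bm V)$, which is the desired identity.

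The only mild obstacle is bookkeeping around invertibility and the normalization constant: I should note explicitly that $\bm V \bm V^{\mathrm H}$ is invertible under the small-correlation scheduling assumption mentioned in Section~\ref{sec_PC} (so that $\bm V^{\dagger}$ in~\eqref{Zero-Forcing} is well defined), that $\bm S$ is invertible on the layer space after truncation, and that the proportionality constant $\mu$ cancels consistently since both sides carry the same scalar. No nontrivial estimates are required; the statement is a purely algebraic consequence of the factorization $\bm F = \bm S \bm V$.
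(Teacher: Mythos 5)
Your proof is correct and follows essentially the same route as the paper's: expand $\bm F = \bm S\bm V$, use $(\bm F\bm F^{\mathrm H})^{-1} = \bm S^{-1}(\bm V\bm V^{\mathrm H})^{-1}\bm S^{-1}$, and cancel the diagonal factors. Your explicit remarks on the invertibility of $\bm S$ and $\bm V\bm V^{\mathrm H}$ and on the role of $\mu$ are welcome bookkeeping that the paper leaves implicit, but they do not change the argument.
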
 

\begin{proof}
\begin{multline}
\bm W_{ZF} (\bm F) \bm S = \bm F^ \mathrm H (\bm F \bm F^ \mathrm H )^{-1} \bm S = \bm V^ \mathrm H \bm S( \bm S \bm V \bm V^ \mathrm H \bm S)^{-1} \bm S = \\  = \bm V^ \mathrm H \bm S \bm S^{-1} (\bm V \bm V^ \mathrm H)^{-1} \bm S^{-1} \bm S = \bm V^ \mathrm H (\bm V \bm V^ \mathrm H )^{-1}  = \bm W_{ZF} (\bm V) \\
\end{multline}
\end{proof}


\subsubsection{Regularized Zero-Forcing (RZF)}

In the geometrical sense, in the ZF method (\ref{Zero-Forcing}), beams are sent not directly to the users but with some deviation, which reduces the useful signal. The following modification corrects the beams, which allows some inter-user interference and significantly increases the payload. 

In the practical sense, in the ZF method (\ref{Zero-Forcing}), the channel right inversion may not exist or matrix $\bm V \bm V^ \mathrm H$ may be ill-conditioned, making ZF poorly perform. There are many practical solutions to this problem based on regularization.
\begin{equation}\label{RZF}
\bm W_{RZF} (\bm V)  = \mu \bm V^ \mathrm H(\bm V \bm V^ \mathrm H +  \lambda \bm I )^{-1}
\end{equation}

Regularized Zero-Forcing is the most common method in real practice, and therefore we use it as the main reference method. As the baseline, we use the analytical form of the regularization matrix using $\lambda = \frac{L\sigma^2}{P}$~\cite{OptimalRegularization}. 

This method cannot cancel all multi-user and multi-layer interference. It admits some interference to maximize single-user power. It is used as a trade-off between using MRT and ZF precoding~\cite{Bjornson} balancing between maximizing the signal power and minimizing the interference leakage, and thus we need to appropriately manage them by optimizing the regularization parameter depending on the noise level. 

The RZF method has the following asymptotic properties~\cite{ZF_MRT}: if $\sigma^2 \rightarrow \infty$, it becomes equivalent to $\bm W_{MRT} = \mu \bm V^ \mathrm H$, which is optimal in \textit{low} \textit{SINR} cases. And if we set $\sigma^2 = 0$, the formula becomes equal to ZF precoding: $\bm W_{ZF}  = \mu \bm V^ \mathrm H(\bm V \bm V^ \mathrm H)^{-1} $, which is optimal in \textit{high} \textit{SINR} cases.

The precoding matrix based on the un-normalized channel~\cite{RZF19} in the case when the number of sending symbols $L$ is less than the number of receiver antennas may be written in the following form of RZF:
\begin{equation}
\bm W_{RZF}(\bm F)  =  \mu \bm F^ \mathrm H(\bm F \bm F^ \mathrm H + \lambda \bm I)^{-1}.
\end{equation}
Here, $\bm F = \bm S \bm V$ parameter and $\lambda = \frac{L\sigma^2}{P}$~\cite{OptimalRegularization} are chosen taking into account noise level $E[\bm n \bm n^ \mathrm H] = \sigma^2 \bm I_R$. Actually, this method is equivalent to an un-normalized channel matrix $\bm H$, which in our case is the matrix $\bm F$, obtained after splitting user channel into multiple streams.

Let us formulate the following well-known fact about 
RZF method~\eqref{RZF}. 
\begin{Theorem}\label{theorem_RZF} Consider the channel decomposition $\bm H = \bm U^ \mathrm H \bm S \bm V$ from the Lemma~\ref{lem_decomposition}. The precoding $\bm W_{RZF}(\bm V)$ with any parameter $\lambda>0$ 
is the solution of the following optimization problem:
\begin{align}& \label{J}
\bm W_{RZF}(\bm V) = {\rm argmin}_{\bm W} J(\bm W), \quad {\rm where} \;\;  J(\bm W) = \| \bm V \bm W - \bm I \|^2_2 + \lambda \| \bm W \|^2_2.
\end{align}
\end{Theorem}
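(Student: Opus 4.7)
The plan is to exploit the fact that $J(\bm W)$ is a strictly convex quadratic functional in the entries of $\bm W$ whenever $\lambda>0$: the first term is the squared Frobenius norm of an affine function of $\bm W$ (hence convex), and $\lambda \|\bm W\|_2^2$ is strictly convex. Consequently any stationary point is the unique global minimizer, so it suffices to derive the first-order condition and solve it.

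First I would rewrite the objective via the trace, using $\|\bm A\|_2^2 = \operatorname{tr}(\bm A^ \mathrm H \bm A)$:
\begin{equation*}
J(\bm W) = \operatorname{tr}\bigl((\bm F\bm W-\bm I)^ \mathrm H(\bm F\bm W-\bm I)\bigr) + \lambda\,\operatorname{tr}(\bm W^ \mathrm H \bm W),
\end{equation*}
then expand and apply the Wirtinger calculus, treating $\bm W$ and $\bm W^ \mathrm H$ as independent. Setting $\partial J/\partial \bm W^ \mathrm H = 0$ yields the normal equation
\begin{equation*}
(\bm F^ \mathrm H \bm F + \lambda \bm I_T)\,\bm W \;=\; \bm F^ \mathrm H,
\end{equation*}
and since $\lambda>0$ makes $\bm F^ \mathrm H \bm F + \lambda \bm I_T$ positive definite, this inverts uniquely to $\bm W = (\bm F^ \mathrm H \bm F + \lambda \bm I_T)^{-1} \bm F^ \mathrm H$. (If one prefers to avoid complex derivatives, the same equation can be obtained by completing the square, writing $J(\bm W) = \operatorname{tr}\bigl((\bm W-\bm W^\star)^ \mathrm H (\bm F^ \mathrm H \bm F + \lambda \bm I)(\bm W-\bm W^\star)\bigr) + \text{const}$ with $\bm W^\star$ the claimed minimizer; this is my fallback if referees dislike Wirtinger derivatives.)

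The last step is to match this expression with the stated form $\bm F^ \mathrm H(\bm F\bm F^ \mathrm H + \lambda \bm I_L)^{-1}$ via the push-through identity. This follows from the elementary observation
\begin{equation*}
\bm F^ \mathrm H (\bm F \bm F^ \mathrm H + \lambda \bm I_L) \;=\; \bm F^ \mathrm H \bm F \bm F^ \mathrm H + \lambda \bm F^ \mathrm H \;=\; (\bm F^ \mathrm H \bm F + \lambda \bm I_T)\bm F^ \mathrm H,
\end{equation*}
after which left-multiplying by $(\bm F^ \mathrm H \bm F + \lambda \bm I_T)^{-1}$ and right-multiplying by $(\bm F\bm F^ \mathrm H + \lambda \bm I_L)^{-1}$ (both invertible for $\lambda>0$) gives the desired identity.

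No real obstacle is expected; the only minor technicality is being careful that $\bm W$ is a general $T\times L$ complex matrix and that the derivative convention is consistent. The positive-definiteness of the regularized Gram matrix, which ensures both invertibility and strict convexity, is the key structural ingredient.
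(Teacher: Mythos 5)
Your proposal is correct and follows essentially the same route as the paper's own proof: derive the normal equation $(\bm F^ \mathrm H \bm F + \lambda \bm I)\bm W = \bm F^ \mathrm H$ from the first-order condition and then convert $(\bm F^ \mathrm H \bm F + \lambda \bm I)^{-1}\bm F^ \mathrm H$ into $\bm F^ \mathrm H(\bm F\bm F^ \mathrm H + \lambda \bm I)^{-1}$ via the push-through identity. Your added remarks on strict convexity (justifying that the stationary point is the global minimizer) and on the Wirtinger/completing-the-square technicalities are welcome refinements that the paper leaves implicit, and you correctly write the gradient in terms of $\bm F$ where the paper's proof has a typographical slip using $\bm V$.
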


\begin{proof}
Calculating the gradient and equating it to zero, we get:
\begin{align} 
\nabla J(\bm W) = 2 \bm V^ \mathrm H (\bm V\bm W - \bm I) + 2 \lambda \bm W = 0 \quad \Leftrightarrow \quad &
(\bm V^ \mathrm H \bm V + \lambda \bm I) \bm W = \bm V^ \mathrm H
\\  \nonumber
\quad  \Leftrightarrow \quad  
\bm W = (\bm V^ \mathrm H \bm V + \lambda \bm I)^{-1} \bm V^ \mathrm H = \bm V^ \mathrm H (\bm V\bm V^ \mathrm H + \lambda \bm I)^{-1}.
\end{align}

The last identity may be proved using multiplication by the $(\bm V\bm V^ \mathrm H + \lambda \bm I)$ matrix from the right side of the identity and by the $(\bm V^ \mathrm H \bm V + \lambda \bm I)$ from the left side of it.


\end{proof}

\begin{Remark}
The algorithm $\bm W_{RZF}(\bm V)$ is also the solution to the constrained optimization problem~\eqref{prob_constr} in assumption $\bm G \bm H = \bm V$ (see~\cite{Joham_RZF}):
\begin{align}\label{prob_constr_opt}
\bm W_{RZF}(\bm V) = {\rm argmin}_{\bm W} {\mathbb E}_{\bm x, \bm n} \big[ \|\bm V \bm W \bm x - \bm x + \bm n\|^2 \big], \quad {\rm s.t.:} \;\; \|\bm W\|^2\leqslant P.
\end{align}
Constraint optimization problem~\eqref{prob_constr_opt} is decomposed to~\eqref{J} with Lagrangian multiplier $\lambda = L\sigma^2 / P$. In other words, $\bm W_{RZF}(\bm V)$ is a special case of the Wiener filter~\cite[Eq.~(34),(35)]{Joham_RZF}, when covariance matrices of signal and noise are $\bm R_{\bm x} := {\mathbb E}[ \bm x \bm x^ \mathrm H ] = \bm I_L$ and $\bm R_{\bm n} := {\mathbb E}[ \bm n \bm n^ \mathrm H ] = \sigma^2 \bm I_R$ and at assumption of special detection $\bm G = \bm U$. 
\end{Remark}

\subsubsection{Wiener Filter Zero-Forcing (WRZF)}

In~\cite{Joham_RZF} authors study the Wiener filter that provides optimum in the problem~\eqref{prob_constr_opt} for arbitrary symbol and noise covariance matrices $\bm R_{\bm x}$ and $\bm R_{\bm n}$ in the case of known detection matrix $\bm G$. We take Conjugate Detection~\eqref{Conjugate Detection} $\bm G = \bm G^C$ and apply the Wiener filter for normalized channel $\bm V$ and symbol $\bm R_{\bm x} = \bm I_{L}$ and corresponding noise covariance $\bm R_{\bm n} = \bm S^{-2}$.
The algorithm 
\begin{equation}\label{WRZF}
\bm W_{WRZF} (\bm V, \bm S)  = \mu \bm V^ \mathrm H(\bm V \bm V^ \mathrm H +  \lambda \bm I )^{-1}, \quad \lambda = \frac {\sigma^2}{P} {\rm tr} (\bm S^{-2})
\end{equation}
is the solution to the constrained optimization problem~\eqref{prob_constr} in assumption of CD ($\bm G = \bm G^C$ so that $\bm G^C \bm H = \bm V$ according to Theorem~\ref{theorem_conj_det}):
\begin{align}&
\bm W_{WRZF}(\bm V, \bm S) = {\rm argmin}_{\bm W} {\mathbb E}_{\bm x, \bm n} \big[ \|\bm V \bm W \bm x - \bm x + \bm n\|^2 \big], \quad  
{\rm s.t.:} \;\; \|\bm W\|^2 \leqslant P,
\end{align}
where $\mathbb E [\bm n\bm n^ \mathrm H ] = \bm S^{-2}, \quad 
\mathbb E [\bm x\bm x^ \mathrm H ] = \bm I_L$.

\subsection{Proposed Precoding Methods}

\subsubsection{Proposed Adaptive Regularized Zero-Forcing (ARZF)}

Algorithms $\bm W_{RZF}$ and $\bm W_{WRZF}$ (and even the Wiener filter in more general cases) are precodings with \emph{scalar Regularization}. Taking into account effective noise from Corollary~\ref{cor_noise}, we propose \textit{Adaptive RZF (ARZF)} algorithm with \emph{diagonal matrix regularization}:
\begin{equation}\label{ARZF}
\bm W_{ARZF}(\bm V) = \mu \bm V^ \mathrm H(\bm V \bm V^ \mathrm H + \lambda \bm S^{-2})^{-1}, \quad \lambda = \frac{\sigma^2 L}{P}.
\end{equation}

ARZF allows the application of different regularization for different users and layers corresponding to their singular values that also include UE path loss.

Using $\bm r = \widetilde{\bm V} \bm W \bm x + \widetilde{\bm S}^{-1} \widetilde{\bm U} \bm n$, let's write out the quadratic error of the received and sent symbols $\bm r$ and $\bm x$:
\begin{multline}
        \|\bm r - \bm x \|^2 =  \| \widetilde{\bm V} \bm W \bm x - \bm x + \widetilde{\bm S}^{-1} \widetilde{\bm U} \bm n \| ^2 \\
        \Rightarrow \mathbb E _ {\bm n \sim \mathcal{CN}(0, \sigma^2 \bm I)} \| (\widetilde{\bm V} \bm W - \bm I ) \bm x + \widetilde{\bm S}^{-1} \widetilde{\bm U} \bm n \| ^2 = \| (\widetilde{\bm V} \bm W - \bm I ) \bm x \| ^2  \\
        \Rightarrow \mathbb E _ {\bm x \sim \mathcal{CN}(0, \bm I)}  \| (\widetilde{\bm V} \bm W - \bm I ) \bm x \| ^2  = \| \widetilde{\bm V} \bm W - \bm I \| ^2 
\end{multline}


Let's introduce the inverse noise covariance matrix~$\widetilde{\bm S}$ into the definition of norm, and we get the following weighted least squares function~\eqref{eq:MSE_S}:
    \begin{equation}\label{eq:MSE_S}
        \mathrm{MSE}_S(\bm W)  = \|\widetilde{\bm V}\bm W - \bm I\|_{\widetilde{\bm S}}^2 + \lambda \|\bm W \|^2_2  = \| \widetilde{\bm S} ( \widetilde{\bm V} \bm W - \bm I ) \|^2_2 + \lambda \| \bm W \|^2_2 
    \end{equation}

So, ARZF provides optimum to the problem~\ref{eq:MSE_S} (Theorem~\ref{theorem_ARZF}).

\begin{Theorem}\label{theorem_ARZF}
Consider the channel decomposition $\bm H = \bm U^ \mathrm H \bm S \bm V$ from Lemma~\ref{lem_decomposition}. Precoding~\eqref{ARZF} is the solution of the following optimization problem (weighted MSE with regularization):
\begin{equation}\label{prob_quadratic_S}
\bm W_{ARZF}(\bm V) = {\rm argmin}_{\bm W} J_{\bm S}(\bm W), \quad {\rm where} \;\;
J_{\bm S}(\bm W) := 
\| \bm S ( \bm V \bm W - \bm I ) \|^2 + \lambda \| \bm W \|^2.
\end{equation}
\end{Theorem}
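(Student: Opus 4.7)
The plan is to mirror the proof of Theorem~\ref{theorem_RZF}, treating $J_{\bm S}$ as a weighted least-squares objective with Tikhonov regularization. Since $\bm S$ is diagonal and real-positive, the first term can be read as the ordinary Frobenius residual $\|(\bm S \bm V)\bm W - \bm S\|^2$, so $J_{\bm S}$ is a convex quadratic in $\bm W$ (the sum of a PSD quadratic form and the strictly convex $\lambda\|\bm W\|^2$ with $\lambda>0$), and any critical point is automatically the global minimizer.

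Next I would compute the Wirtinger gradient exactly as in Theorem~\ref{theorem_RZF}. Using $\bm S^ \mathrm H = \bm S$, setting $\nabla J_{\bm S} = 0$ yields the normal equation
\begin{equation*}
(\bm V^ \mathrm H \bm S^2 \bm V + \lambda \bm I)\bm W = \bm V^ \mathrm H \bm S^2,
\end{equation*}
and hence $\bm W = (\bm V^ \mathrm H \bm S^2 \bm V + \lambda \bm I)^{-1} \bm V^ \mathrm H \bm S^2$.

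The remaining step, which I expect to be the only nontrivial bookkeeping, is a push-through identity recasting this expression into the ARZF form~\eqref{ARZF}. Specifically, I would verify
\begin{equation*}
(\bm V^ \mathrm H \bm S^2 \bm V + \lambda \bm I)^{-1} \bm V^ \mathrm H \bm S^2 \;=\; \bm V^ \mathrm H (\bm V \bm V^ \mathrm H + \lambda \bm S^{-2})^{-1}
\end{equation*}
by multiplying on the left by $(\bm V^ \mathrm H \bm S^2 \bm V + \lambda \bm I)$ and on the right by $(\bm V \bm V^ \mathrm H + \lambda \bm S^{-2})$; both sides then collapse to $\bm V^ \mathrm H \bm S^2 \bm V \bm V^ \mathrm H + \lambda \bm V^ \mathrm H$. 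The invertibility of $\bm S$, and hence of $\bm S^{-2}$, is guaranteed by the convention in Remark~\ref{Reducing to Layers} that only strictly positive singular values are retained after rank truncation. Convexity then promotes this unique stationary point to the global minimizer, matching~\eqref{ARZF} up to the power-normalizing constant $\mu$, which plays no role in the unconstrained formulation (just as in Theorem~\ref{theorem_RZF}).
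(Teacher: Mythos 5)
Your proposal is correct and follows essentially the same route as the paper's proof: compute the gradient, obtain the normal equation $(\bm V^ \mathrm H \bm S^2 \bm V + \lambda \bm I)\bm W = \bm V^ \mathrm H \bm S^2$, and convert to the ARZF form via the push-through identity verified by left- and right-multiplication. Your added remarks on convexity (guaranteeing the stationary point is the global minimizer) and on the invertibility of $\bm S$ make explicit two points the paper leaves implicit, but the argument is the same.
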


\begin{proof}
Calculating the gradient and equating it to zero, we get:
\begin{multline}
\nabla J_{\bm S}(\bm W) = 2 \bm V^ \mathrm H \bm S(\bm S\bm V\bm W - \bm S) + 2 \lambda \bm W = 0 \quad \Leftrightarrow \quad (\bm V^ \mathrm H \bm S^2 \bm V + \lambda \bm I) \bm W = \bm V^ \mathrm H \bm S^2,
\\ \bm W = (\bm V^ \mathrm H \bm S^2 \bm V + \lambda \bm I)^{-1} \bm V^ \mathrm H \bm S^2 = \bm V^ \mathrm H (\bm V\bm V^ \mathrm H + \lambda \bm S^{-2})^{-1}.
\end{multline}

The last identity may be proved using multiplication by the $(\bm V\bm V^ \mathrm H + \lambda \bm S^{-2})$ matrix from the right side of the identity and by the $(\bm V^ \mathrm H \bm S^2 \bm V + \lambda \bm I)$ from the left side.
\end{proof}
\begin{Remark}
Optimization problem~\eqref{prob_quadratic_S} is not common, and it can not be formulated in the form similar to~\eqref{prob_constr_opt}. Nevertheless, such a problem statement is a more adequate approximation to sum SE maximization problem~\eqref{prob_phys}: because ARZF provides a larger sum SE than RZF of WRZF. Detailed simulation results are given below.
\end{Remark}

\begin{Remark}
The proposed algorithm $\bm W_{ARZF}(\bm V)$ along with Theorem~\ref{theorem_ARZF} is the main result of the paper. Transmit Wiener Filter~\cite{Joham_RZF} algorithms have scalar regularization of the form $\lambda \bm I$, so ARZF is not from this class. 
As it is shown in~\cite{Bjornson}, the maximum of the function of UE \textit{SINR} (incl. considered the sum of \textit{SE}~\eqref{f:SE}) is achieved by an algorithm with a proper diagonal regularization, and ARZF is a suboptimal heuristic of such form. 
\end{Remark}

The possible interpretation of the function $J_{\bm S}(\bm W)$ (\ref{J}) is as follows. The second term $\lambda \| \bm W \|^2$ is the standard noise regularization part and the first term, the norm $\| \bm S ( \bm V \bm W - \bm I ) \|^2$, weighted by the matrix $\bm S$, weights more for the layers with higher singular values. And, therefore, the function is optimized more precisely for the layers with a higher signal quality compared to the layers with lower signal quality. In other words, precoding vectors for layers with higher singular values become similar to Zero-Forcing precoding, and for layers with lower singular values become similar to Maximum-Ratio Transmission, i.e. {\it ARZF provides adaptive regularization}. In the next section, we will see that this approach leads to a uniform increase in spectral efficiency compared to the basic method with unit weights.

Let us study the relation of ARZF with other algorithms. 
Firstly, we see that regularization parameter in WRZF is the (arithmetical) average of ARZF regularization:
$$\frac{\sigma^2}{P} {\rm tr} (\bm S^{-2}) = \frac{\sigma^2 L}{P} \cdot \frac 1 L \sum_{l=1}^L s_l^{-2}.$$
In the case when path-losses of all UE are similar $s_l \approx s, l=1,...,L$ ARZF and WRZF provide a similar result. Secondly, the relation between $\bm W_{RZF}$ and $\bm W_{ARZF}$ precoding is formulated as follows:
\begin{Theorem}
Assume the channel matrix has the form of $\bm H = \bm U^ \mathrm H \bm S \bm V$ (Lemma~\ref{lem_decomposition}) and denote $\bm F = \bm U \bm H = \bm S \bm V$. Then, the following relation for ARZF holds:
\begin{equation}\label{ARZF-RZF}
\bm W_{ARZF}(\bm V) = \bm W_{RZF}(\bm F) \bm S.
\end{equation}
\end{Theorem}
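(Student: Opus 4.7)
The plan is to directly substitute the definitions of the two precoders and manipulate the inverse so that the diagonal factor $\bm S^{-2}$ appearing in ARZF is produced from the scalar regularization $\lambda \bm I$ in RZF. This mirrors the strategy used in the proof of Theorem~\ref{theorem_ZF} for the Zero-Forcing case, with the additional observation that $\lambda\bm I$ can be rewritten using $\bm S$ since $\bm S$ is diagonal and invertible.

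Concretely, I would start from
\begin{equation*}
\bm W_{RZF}(\bm F)\,\bm S \;=\; \bm F^ \mathrm H (\bm F \bm F^ \mathrm H + \lambda \bm I)^{-1} \bm S \;=\; \bm V^ \mathrm H \bm S \,(\bm S \bm V \bm V^ \mathrm H \bm S + \lambda \bm I)^{-1} \bm S,
\end{equation*}
omitting the normalization constant $\mu$ exactly as in Theorem~\ref{theorem_ZF}. The key algebraic step is the factorization
\begin{equation*}
\bm S \bm V \bm V^ \mathrm H \bm S + \lambda \bm I \;=\; \bm S\bigl(\bm V \bm V^ \mathrm H + \lambda \bm S^{-2}\bigr)\bm S,
\end{equation*}
which uses $\lambda \bm I = \bm S(\lambda \bm S^{-2})\bm S$ and the fact that $\bm S$ commutes with itself. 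Taking inverses gives
\begin{equation*}
(\bm S \bm V \bm V^ \mathrm H \bm S + \lambda \bm I)^{-1} \;=\; \bm S^{-1}\bigl(\bm V \bm V^ \mathrm H + \lambda \bm S^{-2}\bigr)^{-1}\bm S^{-1},
\end{equation*}
and substituting back yields
\begin{equation*}
\bm W_{RZF}(\bm F)\,\bm S \;=\; \bm V^ \mathrm H \bm S \cdot \bm S^{-1}(\bm V\bm V^ \mathrm H + \lambda \bm S^{-2})^{-1} \bm S^{-1} \cdot \bm S \;=\; \bm V^ \mathrm H (\bm V\bm V^ \mathrm H + \lambda \bm S^{-2})^{-1} \;=\; \bm W_{ARZF}(\bm V).
\end{equation*}

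There is no real obstacle: the invertibility of $\bm S$ (guaranteed by Remark~\ref{Reducing to Layers}, since we retain only the $L_k$ largest nonzero singular values) makes every step legitimate, and once the factorization of $\bm S\bm V\bm V^ \mathrm H \bm S + \lambda \bm I$ is spotted the rest is a one-line manipulation. The only minor care point is the implicit normalization $\mu$: strictly speaking $\bm W_{RZF}(\bm F)$ and $\bm W_{ARZF}(\bm V)$ carry different normalization constants, but the identity~\eqref{ARZF-RZF} is understood up to this scalar rescaling, consistently with how Theorem~\ref{theorem_ZF} is stated and proved.
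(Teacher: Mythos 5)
Your proof is correct and takes essentially the same route as the paper's own: both substitute the definitions, use $\bm F = \bm S\bm V$, and extract $\bm S$ from both sides of the inverse via the factorization $\bm S\bm V\bm V^ \mathrm H\bm S + \lambda\bm I = \bm S(\bm V\bm V^ \mathrm H + \lambda\bm S^{-2})\bm S$, with the paper likewise treating the identity up to the normalization constant $\mu$. No substantive difference.
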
 
\begin{proof}
\begin{multline}
 \bm W_{RZF}(\bm F) \bm S  = \bm F^ \mathrm H(\bm F \bm F^ \mathrm H + \lambda \bm I)^{-1} \bm S = \bm V^ \mathrm H \bm S (\underbrace{\bm S \bm V \bm V^ \mathrm H \bm S + \lambda \bm I}_{\bm B})^{-1} \bm S  \\ = \bm V^ \mathrm H \bm S \bm S^{-1} (\bm V \bm V^ \mathrm H + \bm S^{-1} \lambda \bm I \bm S^{-1})^{-1} \bm S^{-1} \bm S  =  \bm V^ \mathrm H(\underbrace{\bm V \bm V^ \mathrm H + \lambda \bm S^{-2} }_{\bm A})^{-1} = \bm W_{ARZF}(\bm V).
\end{multline}
\end{proof}
Right factor $\bm S$ in the r.h.s. of~\eqref{ARZF-RZF} can be interpreted as a special type of \emph{Power Allocation} algorithm (see an interesting study in~\cite[sec.~7]{Bjornson_tb_17}) that distributes the total transmission power between layers.
In practice, it is better to use $\bm W_{RZF}(\bm V)$ rather than $\bm W_{RZF}(\bm F)$, because the norms of the rows of the matrix $\bm F \bm F^ \mathrm H + \lambda \bm I$ can differ sufficiently (up to several orders!), which leads to unbalanced power distribution between layers (as state Theorem~\ref{theorem_ARZF} another way is to apply a proper Power allocation for $\bm W_{RZF}(\bm F)$. On the other hand, the regularization parameter of $\bm W_{RZF}(\bm F)$ is more natural and correct. The proposed $\bm W_{ARZF}(\bm V)$ combines the benefits from these two approaches and provides an envelope of them.

\begin{Remark} ARZF formula~\eqref{ARZF} can be found using the PCA-decomposition~\cite{PCA}, which is stated in the Theorem~\ref{theorem_ARZF}.
\end{Remark}

In this work we prove theoretical estimates of the number of conditionality of the inverse covariance matrix of the ARZF method and the standard RZF method, which is important for systems with fixed computational accuracy

\begin{Theorem}\label{prop:arzf_cond}
    Let \( \bm V^\mathrm{H} \bm V (\varepsilon) = \bm I + O (\varepsilon), \; \varepsilon \rightarrow 0 \), and given matrices \(\bm A = \bm V^\mathrm{H} \bm V (\varepsilon) + \lambda \bm S^{-2} \rightarrow \bm I + \lambda \bm S^{-2}  \) and \( \bm B = \bm S \bm V^\mathrm{H} \bm V (\varepsilon) \bm S + \lambda \bm I \rightarrow \bm S^2 + \lambda \bm I \), that are inverted in the corresponding precodings \(\bm W_{ARZF} = \bm V^\mathrm{H} \bm A^{-1} \), and \( \bm W_{RZF}= \bm S \bm V^\mathrm{H} \bm B^{-1} \). Then the conditioning numbers of $\bm A$ and $\bm B$ matrices are equal, respectively: \begin{equation}\label{eq:chis}
        \chi(\bm A) = \dfrac{\lambda s^{-2}_{\min} + 1 }{\lambda s^{-2}_{\max} + 1} \text{ and } \chi(\bm B) = \dfrac{\lambda + s^2_{\max}}{\lambda + s^2_{\min}}
    \end{equation}
    and are also related by the ratio:
    \begin{enumerate}
        \item \( \chi(\bm A) < \chi (\bm B) \), if \( \lambda < s^2_{\min} < s^2_{\max} \),
        \item \( \chi(\bm A) > \chi (\bm B) \), if \( s^2_{\min} < s^2_{\max} < \lambda \),
    \end{enumerate}
    where $\lambda = \frac{\sigma^2 L}{P}$

\end{Theorem}

\begin{proof}
The assumption that the matrix of singular user vectors is close to unitary is valid under low correlation user selection: \( \bm V^\mathrm{H} \bm V (\varepsilon) = \bm I + O (\varepsilon) \), with $\varepsilon \rightarrow 0$. Then the matrices under study are reduced to diagonal: $$\bm A = \bm V^\mathrm{H} \bm V (\varepsilon) + \lambda \bm S^{-2} = \bm I + O (\varepsilon) + \lambda \bm S^{-2} \rightarrow \bm I + \lambda \bm S^{-2}$$$$\bm B = \bm S \bm V^\mathrm{H} \bm V (\varepsilon) \bm S + \lambda \bm I = \bm S ( \bm I + O (\varepsilon) ) \bm S + \lambda \bm I = \bm S^2 + \lambda \bm I + O (\varepsilon) \bm S^2 \rightarrow \bm S^2 + \lambda \bm I $$

Their conditionality functions are equal~\eqref{eq:chis}.

By comparing these functions, we get transition points where it is clearly better to use the first formula \( \lambda < s^2_{\min} < s^2_{\max} \): \( \chi(\bm A) < \chi (\bm B) \), it is better to use the second \( s^2_{\min} < s^2_{\max} < \lambda \): \( \chi(\bm A) > \chi (\bm B) \).
\end{proof}

\begin{Remark}
In the situation \( s^2_{\min} < \lambda < s^2_{\max} \) nothing can be said and further investigation is required. Note that only case 1) \( \lambda < s^2_{\min} < s^2_{\max} \) is used in real networks. Singular vectors whose singular values are smaller than the noise power are not used, so the condition \( \chi(\bm A) < \chi (\bm B) \) is always satisfied.
\end{Remark}


The mathematical notation \(\bm W_{ARZF} = \bm V^\mathrm{H} \bm A^{-1} \) improves the conditionality of the system under low to medium noise $\lambda$ conditions, which makes the algorithm computationally more accurate under limited discharge grid conditions compared to another mathematical notation of the method \(\bm W_{RZF} = \bm S \bm V^\mathrm{H} \bm B^{-1} \). An experimental comparison of the conditionality value is shown in Fig.~\ref{fig:condition_number}. 

\begin{figure}
    \includegraphics[width=\linewidth]{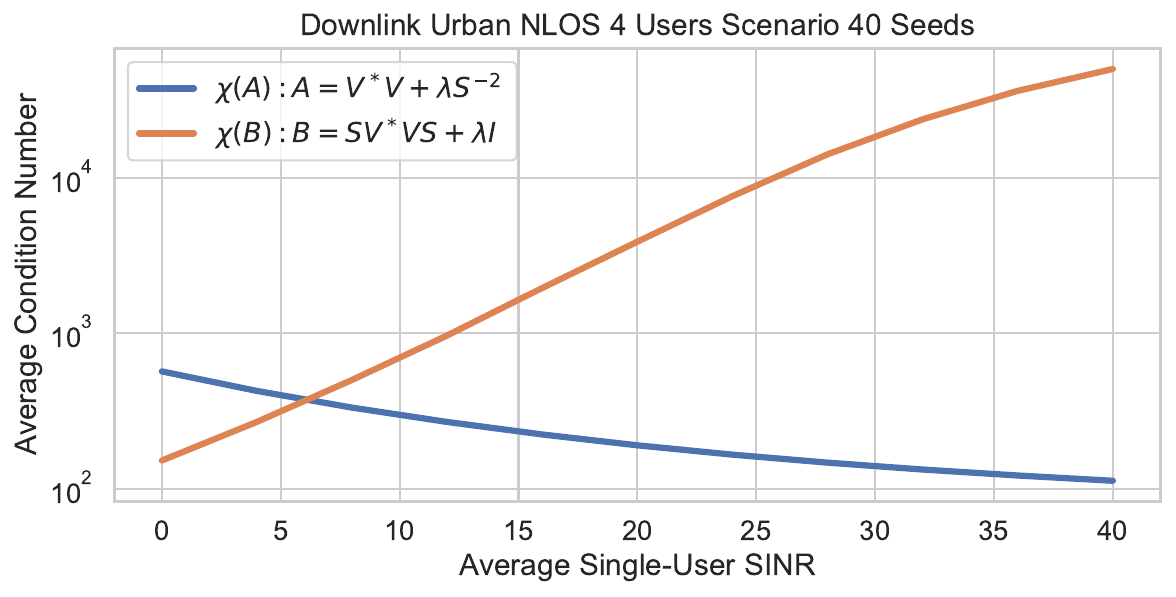}
    \caption{Conditionality of $\chi(\bm A)$ and $\chi(\bm B)$ matrices $\bm A$ and $\bm B$.} 
    \label{fig:condition_number}
\end{figure}

\subsubsection{Asymptotic properties of \textit{ARZF}}
Using the Neumann series~\cite{Neumann}, we formulate the following
\begin{Lemma}\label{lem_as}
Consider square Hermit matrices $\bm A$ and $\bm B$ of the same size and full rank. Introduce a small parameter $0<\varepsilon\ll 1$. The following  and same size square matrices the following asymptotic holds: 
\begin{equation}
(\bm A + \varepsilon \bm B)^{-1} = \bm A^{-1} (\bm I - \varepsilon \bm B \bm A^{-1} + \Xi_1) = (\bm I - \varepsilon \bm A^{-1} \bm B  + \Xi_2) \bm A^{-1},
\end{equation}
where $\|\Xi_1\| \sim \|\Xi_2\| \sim \varepsilon^2 \|\bm B\|^2 \|\bm A\|^{-2}$ as $\varepsilon \rightarrow 0$.
\end{Lemma}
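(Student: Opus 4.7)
The plan is to reduce everything to inverting a perturbation of the identity. I would first factor $\bm A$ out of the sum from either side:
\begin{equation*}
\bm A + \varepsilon \bm B = (\bm I + \varepsilon \bm B \bm A^{-1})\bm A = \bm A(\bm I + \varepsilon \bm A^{-1} \bm B),
\end{equation*}
so that, using invertibility of $\bm A$,
\begin{equation*}
(\bm A + \varepsilon \bm B)^{-1} = \bm A^{-1} (\bm I + \varepsilon \bm B \bm A^{-1})^{-1} = (\bm I + \varepsilon \bm A^{-1} \bm B)^{-1} \bm A^{-1}.
\end{equation*}

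Next, I would apply the Neumann series to each of the middle factors. Provided $\varepsilon$ is small enough that $\varepsilon \|\bm B \bm A^{-1}\| < 1$ (and similarly $\varepsilon \|\bm A^{-1} \bm B\| < 1$), the geometric sum converges in operator norm and gives
\begin{equation*}
(\bm I + \varepsilon \bm B \bm A^{-1})^{-1} = \sum_{k=0}^{\infty} (-\varepsilon \bm B \bm A^{-1})^k = \bm I - \varepsilon \bm B \bm A^{-1} + \Xi_1,
\end{equation*}
with $\Xi_1 := \sum_{k\ge 2} (-\varepsilon \bm B \bm A^{-1})^k$, and analogously $\Xi_2 := \sum_{k\ge 2} (-\varepsilon \bm A^{-1}\bm B)^k$. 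Substituting into the factorisation above yields the two advertised expansions.

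Finally I would estimate the tails by submultiplicativity of the operator norm and the geometric series bound:
\begin{equation*}
\|\Xi_1\| \le \sum_{k=2}^{\infty} \varepsilon^k \|\bm B \bm A^{-1}\|^k = \frac{\varepsilon^2 \|\bm B \bm A^{-1}\|^2}{1 - \varepsilon \|\bm B \bm A^{-1}\|} = O\!\left(\varepsilon^2 \|\bm B\|^2 \|\bm A^{-1}\|^2\right),
\end{equation*}
and the identical bound for $\|\Xi_2\|$. Writing $\|\bm A^{-1}\|^2$ as $\|\bm A\|^{-2}$ in the paper's notation recovers the claimed asymptotics.

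There is essentially no obstacle: the only delicate point is that the Neumann series requires $\varepsilon$ to be sufficiently small relative to $\|\bm A^{-1}\|\,\|\bm B\|$ (equivalently, relative to the conditioning of $\bm A$ after rescaling by $\bm B$), which is implicit in the statement $0<\varepsilon\ll 1$. The Hermitian assumption plays no role in the argument itself; it only matters for the intended applications of the lemma to the matrices $\bm V\bm V^{\mathrm H}$ and $\bm S^{-2}$ later in the paper.
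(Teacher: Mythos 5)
Your proof is correct, and it is more complete than the one in the paper. The paper argues by a formal power-series ansatz: it writes $(\bm A + \varepsilon \bm B)^{-1} = \bm A_0 + \varepsilon \bm A_1 + \dots$, multiplies by $\bm A + \varepsilon\bm B$, and matches coefficients of each power of $\varepsilon$ to read off $\bm A_0 = \bm A^{-1}$ and $\bm A_1 = -\bm A^{-1}\bm B\bm A^{-1}$; it never establishes convergence of the series nor actually bounds the remainders $\Xi_1,\Xi_2$, so the stated estimate $\|\Xi_i\| \sim \varepsilon^2\|\bm B\|^2\|\bm A\|^{-2}$ is left implicit. You instead factor $\bm A + \varepsilon\bm B = \bm A(\bm I + \varepsilon\bm A^{-1}\bm B) = (\bm I + \varepsilon\bm B\bm A^{-1})\bm A$, invoke the genuine Neumann series for the middle factor under the explicit smallness condition $\varepsilon\|\bm B\bm A^{-1}\| < 1$, and bound the tail by a geometric series. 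This buys you (i) a rigorous convergence criterion rather than a formal identity, and (ii) the quantitative remainder bound $\|\Xi_i\| = O(\varepsilon^2\|\bm B\|^2\|\bm A^{-1}\|^2)$ that the lemma asserts but the paper does not verify. Your parenthetical observations are also both accurate: the Hermitian hypothesis is indeed unused (only invertibility of $\bm A$ matters), and the correct constant involves $\|\bm A^{-1}\|^2$, not $\|\bm A\|^{-2}$ --- the two coincide only for well-conditioned $\bm A$, so the lemma's statement is itself slightly loose on this point.
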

\begin{proof}
We are looking for the inverse in the form of formal series $(\bm A + \varepsilon \bm B)^{-1} = \bm A_0 + \varepsilon \bm A_1 + \dots$. Calculating product of $\bm A + \varepsilon \bm B$ and its inverse, we get the chain of equations for each power of $\varepsilon$:
\begin{align}&
\bm A \bm A_0 = \bm A_0 \bm A = \bm I \quad &\Rightarrow& \quad \bm A_0 = \bm A^{-1},
\\ & 
\varepsilon(\bm B \bm A_0 + \bm A \bm A_1) = \varepsilon (\bm A_0 \bm B + \bm A_1 \bm A) = 0 \quad &\Rightarrow& \quad \bm A_1 = - \bm A^{-1} \bm B \bm A^{-1}, \quad \dots
\end{align}
\end{proof}
The following theorem~\ref{theorem_as} gives the asymptotic properties of ARZF.
\begin{Theorem}\label{theorem_as}
Assume that matrix $\bm V \bm V^ \mathrm H$ 
is of full rank equals to $L$. It holds:
\begin{align}&
\bm W_{ARZF}(\bm V) = \bm W_{ZF}(\bm V) + O(\lambda), &\quad& {\rm as} \;\; \lambda\rightarrow 0+,
\\&
\bm W_{ARZF}(\bm V) = \bm W_{MRT}(\bm V) \bm S^2 + O(\lambda^{-1}), &\quad& {\rm as} \;\; \lambda\rightarrow +\infty.
\end{align}
\end{Theorem}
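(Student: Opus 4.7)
The plan is to apply Lemma~\ref{lem_as} directly to the matrix $(\bm V \bm V^{\mathrm H} + \lambda \bm S^{-2})^{-1}$ that sits inside the ARZF formula~\eqref{ARZF}, choosing in each regime the appropriate splitting between the ``main'' invertible part $\bm A$ and the ``perturbation'' $\varepsilon \bm B$. Under the full-rank hypothesis $\operatorname{rank}(\bm V \bm V^{\mathrm H}) = L$, both splittings I use below have $\bm A$ invertible, so Lemma~\ref{lem_as} applies directly.

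For the first asymptotic ($\lambda \to 0^{+}$), I would take $\bm A = \bm V \bm V^{\mathrm H}$ and $\varepsilon \bm B = \lambda \bm S^{-2}$. Lemma~\ref{lem_as} immediately yields
\[
\bigl(\bm V \bm V^{\mathrm H} + \lambda \bm S^{-2}\bigr)^{-1} = \bigl(\bm V \bm V^{\mathrm H}\bigr)^{-1} + O(\lambda),
\]
and multiplying on the left by $\mu \bm V^{\mathrm H}$ produces $\bm W_{ARZF}(\bm V) = \bm W_{ZF}(\bm V) + O(\lambda)$, matching the first claim.

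For the second asymptotic ($\lambda \to +\infty$), the trick is first to factor out $\lambda$, writing
\[
\bm V \bm V^{\mathrm H} + \lambda \bm S^{-2} = \lambda\bigl(\bm S^{-2} + \lambda^{-1} \bm V \bm V^{\mathrm H}\bigr),
\]
and then applying Lemma~\ref{lem_as} with $\bm A = \bm S^{-2}$ (diagonal with strictly positive entries, hence invertible) and $\varepsilon \bm B = \lambda^{-1} \bm V \bm V^{\mathrm H}$. This produces
\[
\bigl(\bm V \bm V^{\mathrm H} + \lambda \bm S^{-2}\bigr)^{-1} = \lambda^{-1}\bm S^{2} + O(\lambda^{-2}),
\]
so that the unnormalized ARZF matrix reduces to $\lambda^{-1}\bm V^{\mathrm H}\bm S^{2}$ at leading order, which is a scalar multiple of $\bm W_{MRT}(\bm V)\bm S^{2}$.

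The main obstacle is the $\lambda$-dependence of the power-normalization constant $\mu$ defined in~\eqref{mu}. In the $\lambda\to 0^{+}$ regime this is harmless because $\|\bm W^{0}_{ARZF}\|$ converges to $\|\bm V^{\mathrm H}(\bm V \bm V^{\mathrm H})^{-1}\|$, so $\mu$ converges smoothly to the ZF normalization. In the $\lambda\to +\infty$ regime, however, $\|\bm W^{0}_{ARZF}\|$ decays like $\lambda^{-1}$, so $\mu$ grows like $\lambda$ and the two factors must cancel to produce a finite limit. I would close the argument by expanding $\|\bm W^{0}_{ARZF}\| = \lambda^{-1}\|\bm V^{\mathrm H}\bm S^{2}\|\bigl(1 + O(\lambda^{-1})\bigr)$ to verify that the cancellation is clean and that the limiting proportionality constant coincides with the natural normalization of $\bm W_{MRT}(\bm V)\bm S^{2}$ under the same power constraint, ensuring that the error term is indeed $O(\lambda^{-1})$ as claimed.
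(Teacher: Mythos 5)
Your proposal is correct and follows essentially the same route as the paper: Lemma~\ref{lem_as} with $\bm A = \bm V\bm V^{\mathrm H}$, $\bm B = \bm S^{-2}$ for $\lambda\to 0^{+}$, and with $\bm A = \bm S^{-2}$, $\bm B = \bm V\bm V^{\mathrm H}$, $\varepsilon = \lambda^{-1}$ after factoring out $\lambda$ for $\lambda\to+\infty$. Your explicit treatment of the normalization constant $\mu$ in the large-$\lambda$ regime is in fact slightly more careful than the paper, which only remarks in passing that $\mu$ differs between the precoding algorithms.
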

\begin{proof}
To get asymptotics as $\lambda\to0+$ one should apply Lemma~\ref{lem_as} with $\bm A = \bm V \bm V^ \mathrm H$ and $\bm B = \bm S^{-2}$, asymptotics as $\lambda\to +\infty$ are given by Lemma~\ref{lem_as} with $\bm A = \bm S^{-2}, \; \bm B = \bm V \bm V^ \mathrm H, \; \varepsilon = \lambda^{-1}$. Also note that normalizing coefficient $\mu$~\eqref{mu} is different for different precoding algorithms. 
\end{proof}

Property at low noise $\lambda\to0+$ is the same as for $\bm W_{RZF}$. Another asymptotic from Theorem~\ref{theorem_as} means that, when noise is greater than signal even for UE with the best channel, ARZF only serves UE with the best channel.

\subsubsection{Gradient-Based Optimal Regularization (OPT)}

\begin{algorithm}
\SetAlgoLined
\KwIn{Channel $\bm H$ and its decomposition $\bm H = \bm U^ \mathrm H \bm S \bm V$ by Lemma~\ref{lem_decomposition}, station power $P$, noise $\sigma^2$, \# of iterations $N$;}
\textbf{Define}  precoding function $\bm W(\bm R)$ using~\eqref{Diff RZF}; \\
\textbf{Define}  detection function $\bm G(\bm R) = \bm G^{\textit{MMSE}} (\bm W (\bm R) )$ using~\eqref{MMSE_1};
\\
\textbf{Define}  target function $J^{\textit{SE}}(\bm R) = \textit{SE}(\bm W(\bm R), \bm H, \bm G(\bm R), \sigma^2)$ of
\eqref{Symbol SINR},~\eqref{f:sek},~\eqref{J_SE};
\\
\textbf{Set} tolerance grad $\varepsilon_g$, termination tolerance on first order optimality (1e-5);
\\
\textbf{Set} tolerance change $\varepsilon_c$, termination tolerance on function and parameters (1e-9);
\\
\textbf{Initialize} regularization by $\bm R_0 = \dfrac{L\sigma^2}{P} \bm S^{-2}$; 
\\
\For{$n=0$ \KwTo $N-1$}{

Calculate gradient $\nabla_{\bm R} J^{\textit{SE}}(\bm R)|_{\bm R = \bm R_n}$ \;

\eIf{Convergence conditions hold on $\varepsilon_g$ or $\varepsilon_c$}{
     \Return{$\bm W_\textit{OPT} = \bm W (\bm R_{n})$}}{

Calculate L-BFGS direction~\cite{L-BFGS} using the gradient: $\bm D_n = \bm D (\nabla_{\bm R} J^{\textit{SE}}(\bm R)|_{\bm R = \bm R_n})$ \;
Optimize scalar step length $\alpha_n = \arg\max\limits_\alpha J^{\textit{SE}}(\bm R_n + \alpha \bm D_n)$\;
Make the optimization step: $\bm R_{n+1} = \bm R_n + \alpha_n \bm D_n$\;
}}
\Return{$\bm W_\textit{OPT} = \bm W (\bm R_N)$}
\caption{On the Optimal Precoding Regularization $\bm R$}
\label{OPT Scheme}
\end{algorithm}%

The proposed algorithm provides minimum to quadratic optimization problem~\eqref{prob_quadratic_S}, but still, there is a question: how good is it w.r.t. to sum \textit{SE} function~\eqref{J_SE}?
In~\cite{Bjornson} it is proved that optimal solution to the function $f(\textit{SINR}_1,...,\textit{SINR}_K)$ with total power constraint is given by algorithm of the form
\begin{equation}
\bm W_\textit{OPT}(\bm V) = \mu \bm V^ \mathrm H (\bm V \bm V^ \mathrm H + \bm R)^{-1} \bm P, \quad \bm R = {\rm diag}(r_1,\dots, r_L), \quad \bm P = {\rm diag}(p_1,\dots, p_L).
\end{equation}
It is hardly possible to explicitly express the particular values of $\bm R, \bm P$ for optimal precoding but the structure itself is useful.   
According to this result, we are going to study the effectiveness of the proposed {ARZF} algorithm comparing it with gradient search over the sum of \textit{SE}~\eqref{f:SE}. 

To this end we formulate 
an iterative solution with differentiable embedded parts. The proposed parametric solution preserves the structure of the basic {RZF} algorithm and optimizes the target functional of \textit{SE}, which leads to a significant improvement in quality.  We set the constrained smooth optimization problem. The problem is to find the local maximum of the \textit{Spectral Efficiency}~\eqref{J_SE}:
\begin{maxi}|l|
  {\text{diag}\{r_1 \dots r_L\} = \bm R}{\textit{SE}(\bm W (\bm R), \bm H, \bm G^{\textit{MMSE}}(\bm W), \sigma^2)}{}{}, \quad \text{s.t.:}\;\; \|(w_{t1},\dots,w_{tL})\|^2 \leqslant \frac P T
\end{maxi}



A parametric solution uses the RZF formula as follows:
\begin{align} & \label{Diff RZF}
    \bm W (\bm V, \bm R) = \mu (\widehat{\bm W}) \widehat{\bm W} (\bm V, \bm R),\quad \widehat{\bm W}(\bm R) =  \bm V^ \mathrm H (\bm V \bm V^ \mathrm H + \bm R)^{-1}, \quad
        \\ & \nonumber
    \mu (\widehat{\bm W}) = \frac{\sqrt{P / T}}{ \max\limits_t \{\|(w_{t1},\dots,w_{tL})\| \}^T_{m=1}}.
\end{align}

We restrict the maximum power of antennas by multiplying the precoding matrix by the scalar, which allows us to satisfy the power constraints and save the geometry and desired properties of the constructed precoding.

In further experiments, we will make the real diagonal matrix $\bm R \in \mathbb R ^ {L \times L}$ differentiable and optimize it for the target Spectral Efficiency functional, which is one of our contributions. The optimization of precoding matrix $\bm W$ is given in the corresponding article~\cite{QNS}.

Detection is involved in the calculation of the gradient, it can be considered as an integral part of the Spectral Efficiency. That is, the differentiable variables here are the diagonal of the regularization matrix, then the precoding matrix is calculated using the regularization, then this precoding is substituted into the detection. Finally, the calculated precoding and detection are both substituted into the gradient. The regularization diagonal is involved in all these operations as an internal variable of a composite function, and its gradient can be calculated using a chain rule, back-propagation algorithm, as it happens using automatic differentiation of the PyTorch library.

\begin{Remark}
In considered gradient optimization precoding is constructed, assuming some particular detection that could be performed on the UE side (namely, MMSE detection). This idea is widely discussed in state-of-the-art (see e.g.~\cite{PrecodingDetection}) and can be used to improve merely any precoding policy with a corresponding iterative procedure.
\end{Remark}

\begin{figure}
    \centering
    \includegraphics[width=0.8\linewidth]{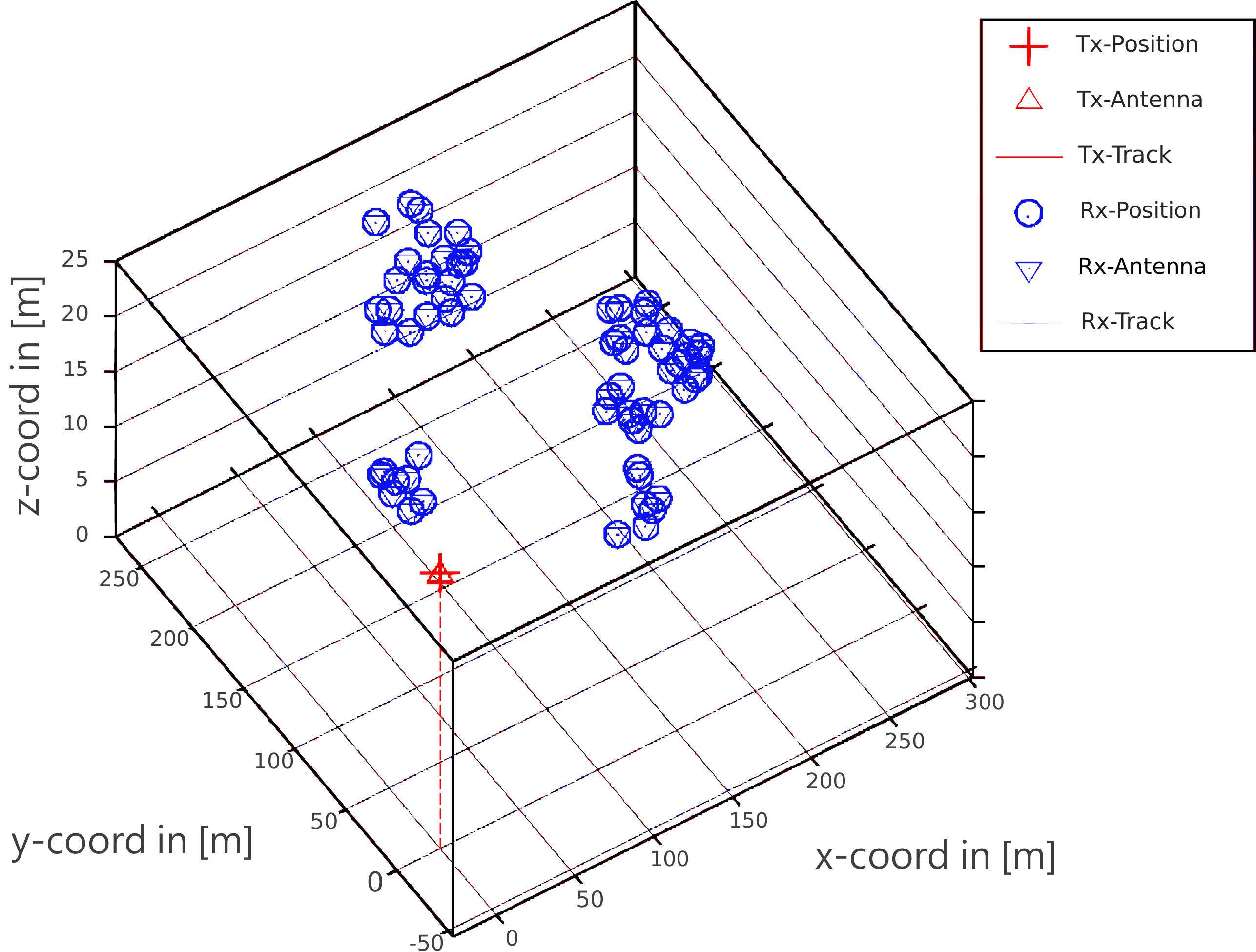}
    \caption{An example of random user generation in an Urban environment.}
    \label{fig:quadriga_layout}
\end{figure}

\section{Simulation Results and Discussion}

\subsection{Setup in Quadriga}

In this section, we describe how we obtain data from Quadriga, open-source software for generating a realistic radio channel. The considered scenario is Urban Non-Line-of-Sight 3GPP\_38.901\_RMa\_NLOS~\cite{Tse_tb_05, NLOS}.

Fig.~\ref{fig:quadriga_layout} shows an example of user placement, and users are assigned either in a cluster of one of the buildings or on the ground next to a building. The users are highlighted with blue circles, and the base station~--- one in red. The distances between the individual antennas of the station and the users are negligible compared to the distances between the users and the station, so the station and users are depicted as separate circles, each containing multiple antennas within it.

The overall procedure is as follows, for each random seed:
\begin{enumerate}
    \item We generate a random environment around the base station;
    \item We select random user positions near the base station;
    \item We select relevant users based on correlations.
\end{enumerate}

Next, we describe the procedure in detail.

First, we fix the position of the base station at the coordinates $[0, 0, 25]$ and set the random seed in Quadriga~\cite{Quadriga} to generate a random environment. 

Second, we select the random user positions around the base station. To do this, we locate users in the urban landscape (see example at  Fig~\ref{fig:quadriga_layout}):
\begin{enumerate}
    \item We sample up to 8 cluster centers $x_c$, $y_c$ in the 120$^{\circ}$~sector from the base station within 2000m from the base station. Each cluster represents a part of a city building;
    \item We assign a random cluster height $z_c = 1.5$m $+ (3 \cdot U(\{1, \ldots, 10\}) - 1)$ , selecting the cluster floor in a building from the uniform distribution $U$;
    \item For each user, we assign a cluster id $c(u)$ and sample $x_u, y_u$ position randomly over the 60m circle around the cluster centre;
    \item We sample the height of each user in addition to the height of the cluster, at the same time, 80\% of users are placed near the cluster floor $ z_u = z_{c(u)} + 3 \cdot U(\{-1, 0, 1\})$ m. and 20\% of users are placed outdoors $z_u = 1.5$m.
\end{enumerate} 

Third, after generating channel matrices for a fairly large number of users $K_{\max} = 64$, we perform the user selection to choose subsets of users that will be served together (this procedure simulates Scheduler). We select one subset of $K < K_{\max}$ users, such that there is no pair of too high correlated users (in real life users with high correlations are served at a different time or on different frequency intervals). The correlation between users $i, j$ is  measured as the squared cosine between the main singular vectors: ${\rm corr}_{i,j} = |\bm v_{i,1}^ \mathrm H \bm v_{j,1}|^2 \leqslant 0.3$. The number of transmitted symbols for each UE is $L_k = 2$ (it is the simplest Rank Selection policy).

We also consider two different situations (scenarios):
\begin{enumerate}
    \item When UE have different path-losses (PL), when we add random factor \\ $\text{PL}\in [-10 \textrm{dB}, 10 \textrm{dB}]$, which is the realistic channel variance for close UE (in real networks path-loss variance for UE within one cell can be up to $60 \textrm{dB}$).
    \item When UE have almost equal path-losses, i.e. first singular values of each two UE are the same $s_{i,1} \sim s_{j,1}$ (this is the default option for Quadriga channel generation);
\end{enumerate}

Presented results are the average over 40 considered random realizations of channel matrix and UE subsets. 
Namely, for each scenario we generate 40 different channels: $\bm H \in \mathbb{C}^{K \times R_k \times T}$:
\begin{itemize}
    \item $T = 64$ base station antennas;
    \item $K = 4$ users;
    \item $R = 16$ total user antennas;
    \item $L = 8$ total user layers.
\end{itemize}
The carrier frequency for each channel matrix is selected randomly over the bandwidth. The base station antenna array forms a grid with $8$ placeholders along the $y$ axis and $4$ placeholders along the $y$ axis, the receiver antenna array consists of two placeholders along the $x$. Each placeholder contains two cross-polarized antennas. Our user generation algorithm produces realistic setups for the Urban case and can be convenient 
for other studies. An interested reader can find detailed hyperparameters for antenna models and generation processes in Table~\ref{tab:quadriga_hyp} in Appendix.

\subsection{Results and Discussion}\label{sec_simulation}

We compare the proposed algorithm (ARZF) with reference algorithms (MR, ZF, RZF, WRZF) and the estimation of the precoding with optimal diagonal regularization (OPT).
In Fig.~\ref{Avg SE Pathloss} (Tab.~\ref{tab:Avg SE Pathloss}) and Fig.~\ref{Min SE Pathloss} (Tab.~\ref{tab:Min SE Pathloss}) 
the Average Spectral Efficiency~\eqref{J_SE} and the Minimal SE~\eqref{Min SE} of the described algorithms are presented for the different path-loss scenarios. In Fig.~\ref{Avg SE Casual}  (Tab.~\ref{tab:Avg SE Pathloss}) and Fig.~\ref{Min SE Casual} (Tab.~\ref{tab:Min SE Casual}) the same quality functions are compared for the equal path-loss scenario. 

ARZF (the red line,~\ref{ARZF}) provides the best Average Spectral Efficiency~\eqref{J_SE} compared to all other analytical methods up to the highest \textit{SUSINR} region (see Fig.~\ref{Avg SE Pathloss},~\ref{Avg SE Casual}). The advantages of the ARZF algorithm are revealed due to adaptive regularization for a specific path-loss (and thus singular values order) for each user. In detail, the situation is as follows. ZF ($\bm W_{ZF}(\bm F), \bm F = \bm S \bm V$ and $\bm W_{ZF}(\bm V)$) are better than MRT ($\bm W_{MRT}(\bm V)$) on high \textit{SUSINR} region, and are worse for low  \textit{SUSINR} ($\bm W_{ZF}(\bm V)$ is always better than $\bm W_{ZF}(\bm F)$). RZF is better than both MRT and corresponding ZF, e.g. $\bm W_{RZF}(\bm V)$ is better than $\bm W_{MRT}(\bm V)$ and $\bm W_{ZF}(\bm V)$. One can say that RZF provides an envelope of MRT and ZF. It is worth saying that although $\bm W_{RZF}(\bm V)$ outperforms $\bm W_{RZF}(\bm F)$ on high \textit{SUSINR} region $\bm W_{RZF}(\bm F)$ is better for low \textit{SUSINR}. Due to the adaptive approach, ARZF makes an envelope of both RZF algorithms. It is quite natural that on equal path-loss scenario, WRZF and ARZF work similarly: it happens because all the singular values are almost the same. Well, on the different path-loss scenarios, WRZF degrades significantly and ARZF is better. 

One may notice, that the gradient-based iterative method OPT (the black line,~\eqref{OPT Scheme}) provides superior results. On the other hand, OPT is computationally expensive and cannot be used in practice. However, OPT is very good for upper bound estimation: the results for OPT show that the ARZF method can be improved in further research. We also measure the quality of Minimal Spectral Efficiency~\eqref{Min SE} (see Fig.~\ref{Min SE Pathloss},~\ref{Min SE Casual}) to investigate the performance of the weakest user.

These results show that the ARZF method, configuring the regularization in a special adaptive way, outperforms the Average SE at the expense of Minimal SE, i.e. of the weaker users. This follows from the fact that in terms of the Minimal SE function, the ARZF method can lose compared to other algorithms, especially in the low \textit{SUSINR} region. The Average SE quality of the system, however, increases significantly. The contradiction between Average and Minimal SE functions is well-known (see. e.g.~\cite{Bjornson_tb_17}).

Fig.~\ref{Avg SE Casual} (Tab.~\ref{tab:Avg SE Casual}) and Fig.~\ref{Min SE Casual} (Tab.~\ref{tab:Min SE Casual}) represent Average and Minimal scenarios and show the experimental results for users with almost equal path-losses. The general tendency is the same as for different path-loss scenarios: The proposed ARZF method outperforms all other analytical references in the Average SE, but it is not the best from the Minimal SE point of view. In this scenario the gain in Average SE is not so big, particularly, ARZF behaves almost the same way as WRZF. The last observation in Fig~\eqref{Avg SE Casual} (Tab.~\eqref{tab:Avg SE Casual}) is very important. The red line of $W_{ARZF}(\bm V)$ coincides with the blue line of $W_{WRZF}(\bm V)$. This result shows that both algorithms work the same when users have equal path-losses, which is also confirmed by theoretical calculations


\section{Conclusions}\label{sec_conclusion}
Multi-user precoding optimization is a key problem in modern cellular wireless systems, which are based on massive MIMO technology. In this paper, we analyze the performance of different transmission precoding techniques in a downlink multi-user scenario. Linear techniques are computationally less expensive. On the other hand, non-linear techniques can provide better performance. The first technique that we propose in this paper is a low-complexity heuristic formula of the Adaptive Regularized Zero-Forcing (ARZF) algorithm. This technique is especially attractive in cases when the users and the base station are equipped with multiple antennas and have various path losses. We study analytically the relation of ARZF with known RZF-like algorithms and its asymptotic properties. Finally, we study the properties of the proposed ARZF method on simulations, using a realistic Quadriga channel. Simulations show uniform improvement over the reference methods to \textit{Average Spectral Efficiency} for all considered scenarios. This particularly means that weighted MSE problem statement~\eqref{prob_quadratic_S} is a more adequate approximation of~\eqref{prob_phys} than the standard MMSE statement. \textit{Minimal Spectral Efficiency} function is not the best, but is still acceptable over ARZF. We also introduce a non-linear technique Gradient-Based Optimal Regularization (OPT) that performs gradient optimization on the target \textit{Spectral Efficiency} function and finds the optimal diagonal regularization matrix this way. This algorithm can be used for the upper bound study.

\section*{Declarations}

\subsection{Availability of data and materials}
The Quadriga-generated channel matrices, as well as the full Python implementation of the proposed approach and baselines, are publicly available in the Github repository at \url{https://github.com/eugenbobrov/Adaptive-Regularized-Zero-Forcing-Beamforming-in-Massive-MIMO-with-Multi-Antenna-Users}

\subsection{Authors Contribution }
EB and BC formulated the proposed formula of ARZF; EB, BC, DM and DY formulated, proved, and wrote the theoretical results section; EB did the main work of conducting the experiments and drafting the manuscript; BC also did the main work of the literature review and the introduction; EB and DM conceived the study; ST helped with dataset preparation; VK and DM were involved in methodology; HL and DZ contributed to project administration and supervision. All authors read and approved the final manuscript.

\subsection{Acknowledges}
Authors are grateful to E.~Barinova and E.~Levitskaya for support. 

\subsection{Funding}
Authors have received research support from Huawei Technologies. 

\subsection{Competing Interest}
The authors declare that they have no competing interests.

\begin{figure}
    \centering
    \includegraphics[width=1\linewidth]{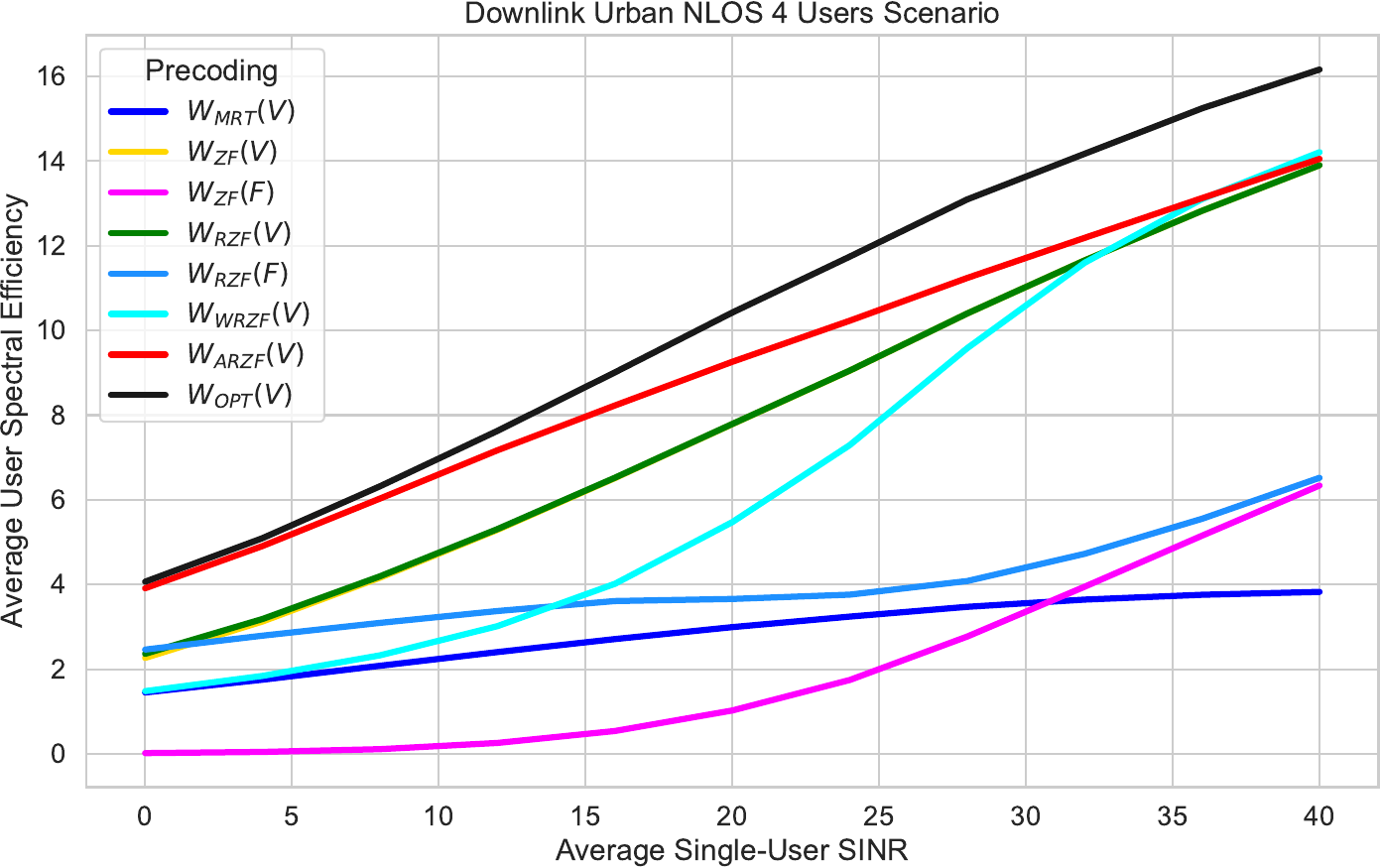}
    \caption{Average SE of the different Precoding algorithms in the \textit{Urban NLOS} scenarios \textit{using different path-losses}. The green line coincides with the yellow one. Matrix $\bm F = \bm S \bm V$. Values are presented in Tab.~\ref{tab:Avg SE Pathloss}.}
    \label{Avg SE Pathloss}
\end{figure}

\begin{figure}
    \centering
    \includegraphics[width=1\linewidth]{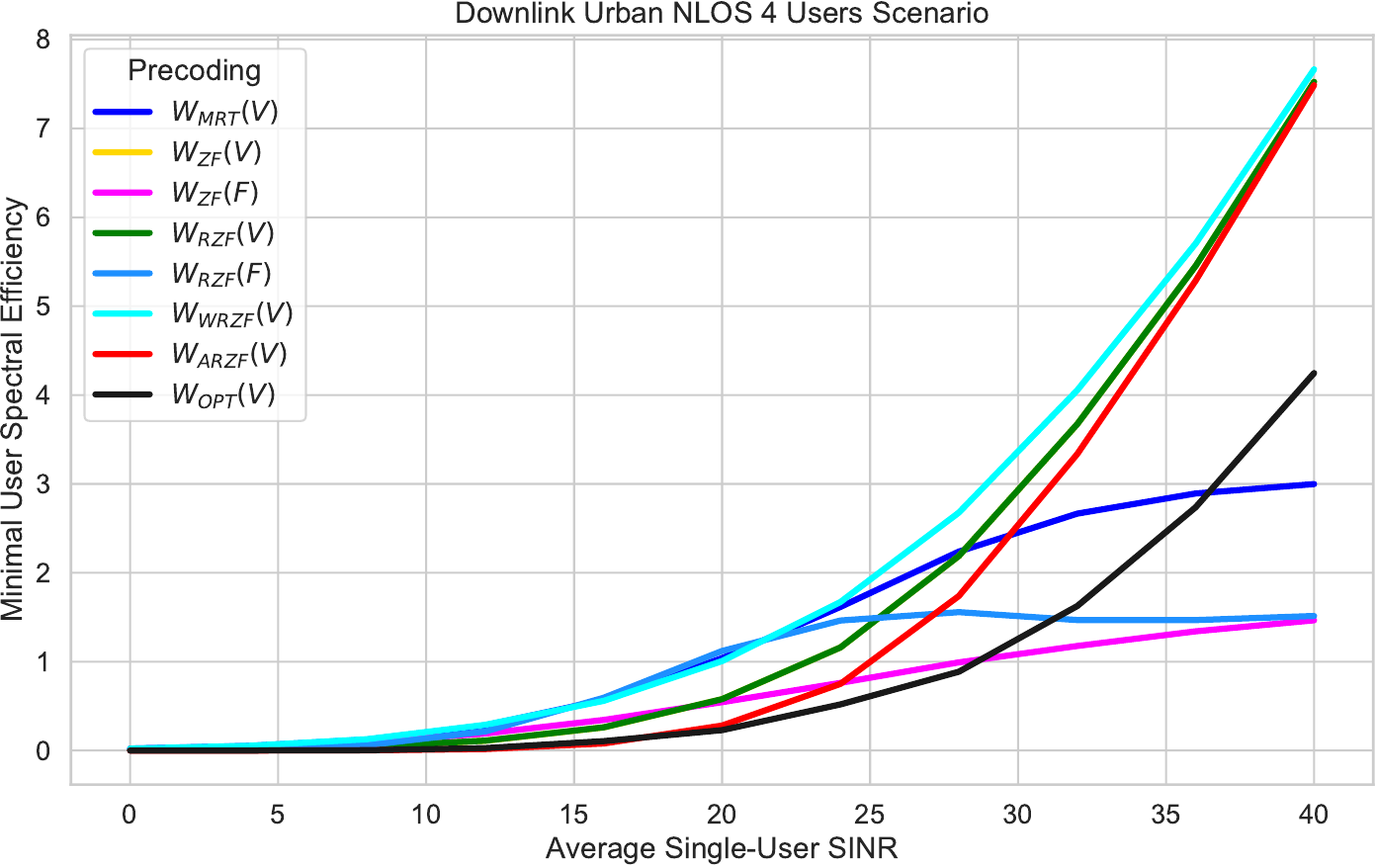}
    \caption{Minimal User SE of the different Precoding algorithms in the \textit{Urban NLOS} scenarios \textit{using different path-losses}. The green line coincides with the yellow one. Matrix $\bm F = \bm S \bm V$. Values are presented in Tab.~\ref{tab:Min SE Pathloss}.}
    \label{Min SE Pathloss}
\end{figure}

\begin{figure}
    \centering
    \includegraphics[width=1\linewidth]{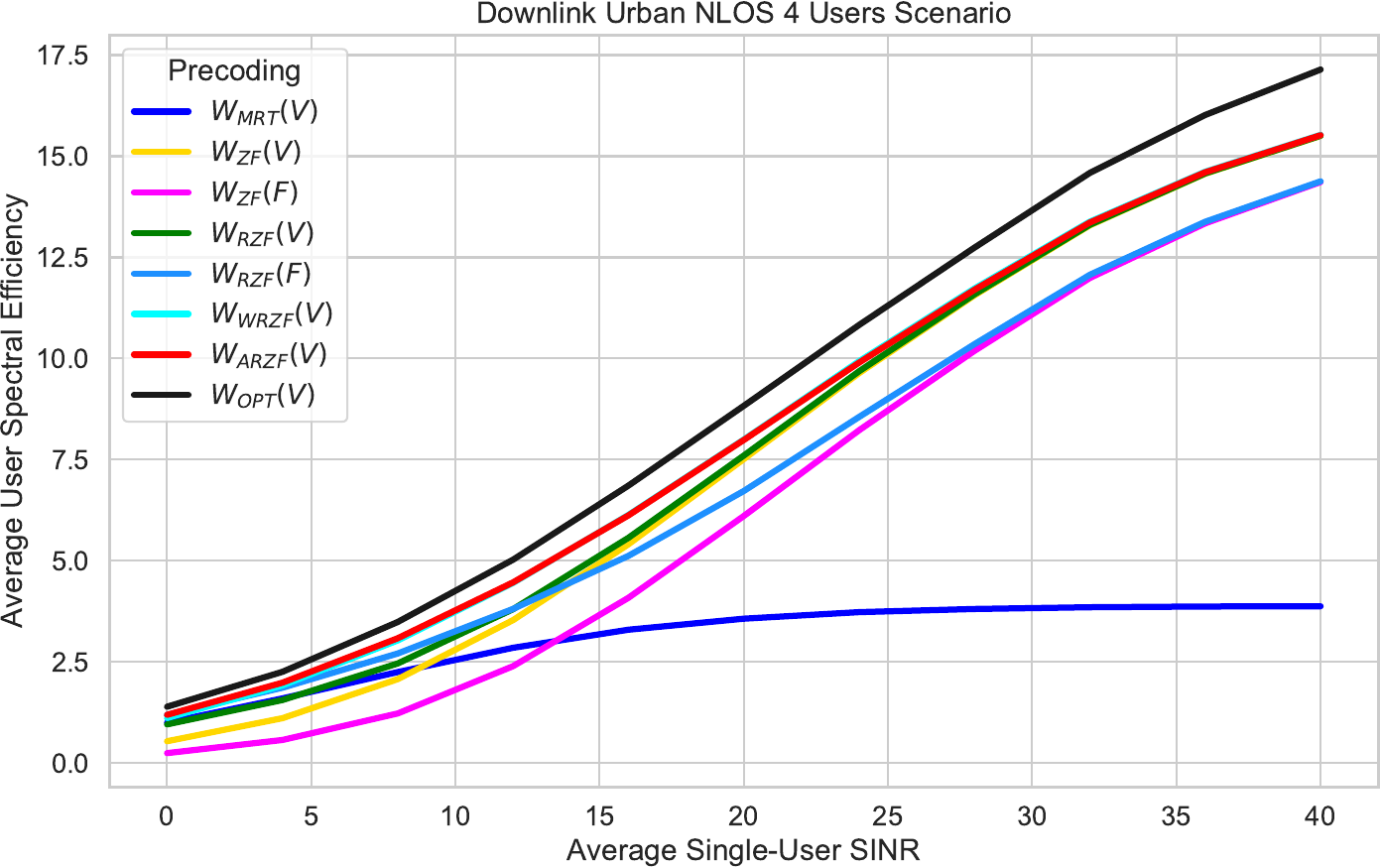}
    \caption{Average SE of the different Precoding algorithms in the \textit{Urban NLOS} scenario (Tab.~\ref{tab:Avg SE Casual}). The red line of $\bm W_{ARZF}(\bm V)$ coincides with the blue line of $\bm W_{WRZF}(\bm V)$. This result shows that the algorithms work the same when users have equal path-losses.}
    \label{Avg SE Casual}
\end{figure}

\begin{figure}
    \centering
    \includegraphics[width=1\linewidth]{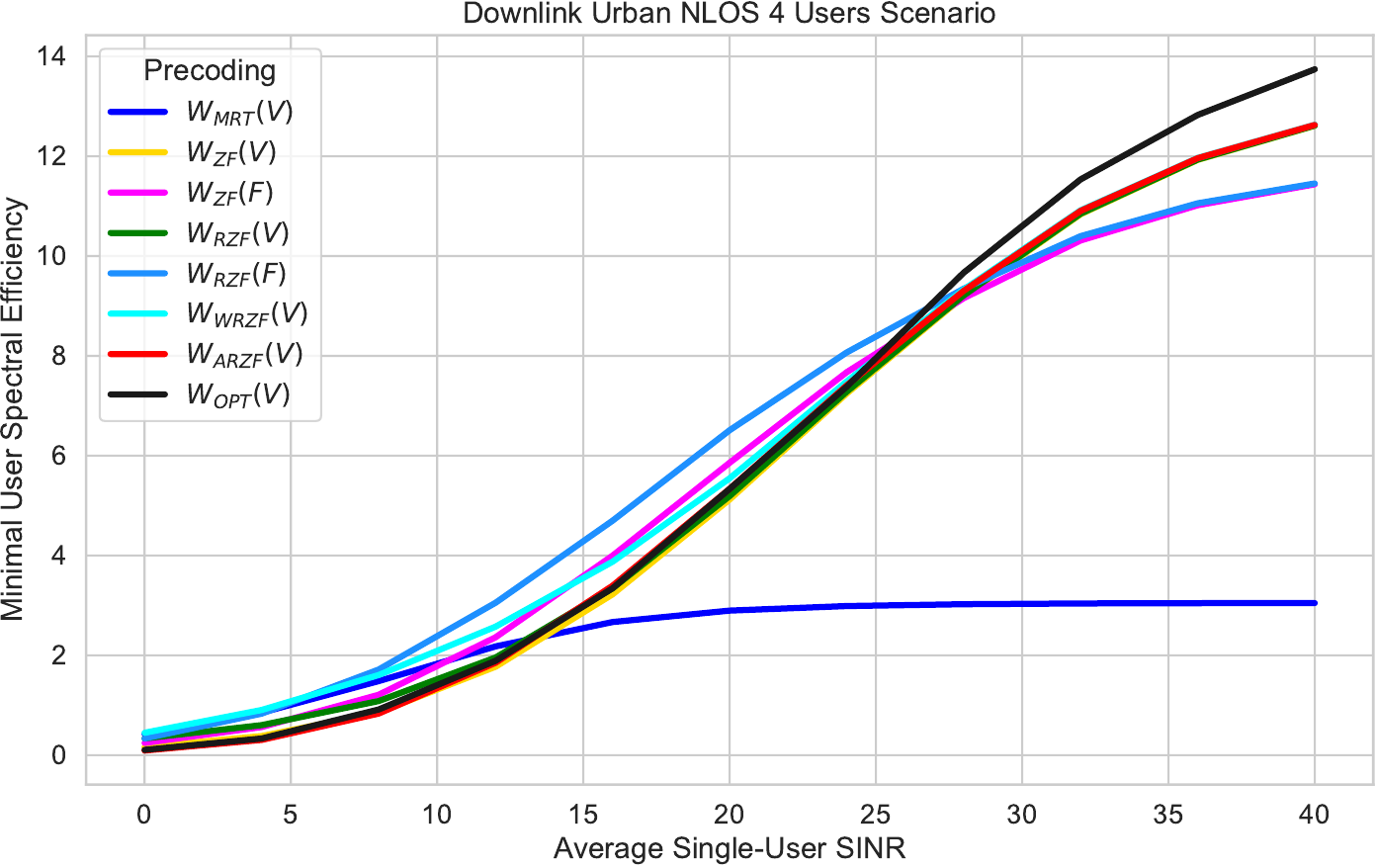}
    \caption{Minimal User SE of the different Precoding algorithms in the \textit{Urban NLOS} scenario (Tab.~\ref{tab:Min SE Casual}). }
    \label{Min SE Casual}
\end{figure}

\begin{table}
    \centering
    \resizebox{\columnwidth}{!}{%
    \begin{tabular}{l|r|r|r|r|r|r|r|r|r}
    \toprule
    Precoding &  $W_{MRT}(V)$ &  $W_{ZF}(V)$ &  $W_{ZF}(F)$ &  $W_{RZF}(V)$ &  $W_{RZF}(F)$ &  $W_{WRZF}(V)$ &  $W_{ARZF}(V)$ &  $W_\textit{OPT}(V)$ \\
    SU SINR &               &              &              &               &               &                &                &               \\
    \midrule
    0       &          1.45 &         2.27 &         0.02 &          2.37 &          2.46 &           1.48 &           3.91 &          4.07 \\
    4       &          1.75 &         3.13 &         0.05 &          3.19 &          2.79 &           1.84 &           4.91 &          5.10 \\
    8       &          2.08 &         4.16 &         0.11 &          4.19 &          3.10 &           2.33 &           6.03 &          6.32 \\
    12      &          2.41 &         5.29 &         0.26 &          5.31 &          3.37 &           3.02 &           7.17 &          7.63 \\
    16      &          2.71 &         6.51 &         0.54 &          6.52 &          3.61 &           4.02 &           8.23 &          9.01 \\
    20      &          2.99 &         7.79 &         1.03 &          7.79 &          3.66 &           5.48 &           9.26 &         10.42 \\
    24      &          3.24 &         9.05 &         1.75 &          9.05 &          3.76 &           7.29 &          10.23 &         11.74 \\
    28      &          3.47 &        10.40 &         2.77 &         10.40 &          4.08 &           9.58 &          11.24 &         13.09 \\
    32      &          3.65 &        11.65 &         3.95 &         11.65 &          4.73 &          11.60 &          12.19 &         14.17 \\
    36      &          3.76 &        12.83 &         5.16 &         12.83 &          5.55 &          13.10 &          13.13 &         15.24 \\
    40      &          3.83 &        13.90 &         6.34 &         13.90 &          6.52 &          14.21 &          14.05 &         16.17 \\
    \bottomrule
    \end{tabular}%
    }
    \caption{The table refers to Fig.~\ref{Avg SE Pathloss} and shows the Average Spectral Efficiency of the different precodings in the \textit{Urban NLOS} scenario \textit{using different path-losses}. Proposal algorithm is $\bm W_{ARZF} (\bm V)$. Optimal regularization $\bm W_\textit{OPT}(\bm V)$ was configured using the L-BFGS optimization algorithm.}
    \label{tab:Avg SE Pathloss}
\end{table}

\begin{table}
    \centering
    \resizebox{\columnwidth}{!}{%
    \begin{tabular}{l|r|r|r|r|r|r|r|r|r}
    \toprule
    Precoding &  $W_{MRT}(V)$ &  $W_{ZF}(V)$ &  $W_{ZF}(F)$ &  $W_{RZF}(V)$ &  $W_{RZF}(F)$ &  $W_{WRZF}(V)$ &  $W_{ARZF}(V)$ &  $W_\textit{OPT}(V)$ \\
    SU SINR &               &              &              &               &               &                &                &               \\
    \midrule
0       &          0.02 &         0.01 &         0.02 &          0.01 &          0.00 &           0.02 &           0.00 &          0.00 \\
4       &          0.05 &         0.02 &         0.04 &          0.02 &          0.01 &           0.05 &           0.00 &          0.00 \\
8       &          0.12 &         0.04 &         0.10 &          0.04 &          0.06 &           0.13 &           0.00 &          0.00 \\
12      &          0.28 &         0.11 &         0.19 &          0.11 &          0.21 &           0.29 &           0.02 &          0.03 \\
16      &          0.57 &         0.26 &         0.34 &          0.26 &          0.59 &           0.56 &           0.08 &          0.11 \\
20      &          1.04 &         0.58 &         0.54 &          0.58 &          1.12 &           1.01 &           0.28 &          0.23 \\
24      &          1.62 &         1.16 &         0.76 &          1.16 &          1.46 &           1.67 &           0.75 &          0.52 \\
28      &          2.24 &         2.19 &         0.99 &          2.19 &          1.56 &           2.68 &           1.74 &          0.89 \\
32      &          2.67 &         3.67 &         1.18 &          3.67 &          1.47 &           4.05 &           3.34 &          1.63 \\
36      &          2.89 &         5.45 &         1.34 &          5.45 &          1.47 &           5.70 &           5.29 &          2.74 \\
40      &          3.00 &         7.52 &         1.47 &          7.52 &          1.51 &           7.67 &           7.49 &          4.25 \\
    \bottomrule
    \end{tabular}%
    }
    \caption{The table refers to Fig.~\ref{Min SE Pathloss} and shows the Minimal Spectral Efficiency of the different precodings in the \textit{Urban NLOS} scenario \textit{using different path-losses}. Proposal algorithm is $\bm W_{ARZF} (\bm V)$. Optimal regularization $\bm W_\textit{OPT}(\bm V)$ was configured using the L-BFGS optimization algorithm.}
    \label{tab:Min SE Pathloss}
\end{table}

\begin{table}
    \centering
    \resizebox{\columnwidth}{!}{%
    \begin{tabular}{l|r|r|r|r|r|r|r|r}
    \toprule
    Precoding &  $W_{MRT}(V)$ &  $W_{ZF}(V)$ &  $W_{ZF}(F)$ &  $W_{RZF}(V)$ &  $W_{RZF}(F)$ &  $W_{WRZF}(V)$ &  $W_{ARZF}(V)$ &  $W_\textit{OPT}(V)$ \\
    \textit{SUSINR} &               &              &              &               &               &                &                &               \\
    \midrule
    0       &          1.00 &         0.54 &         0.25 &          0.96 &          1.19 &           1.12 &           1.19 &          1.39 \\
    4       &          1.60 &         1.11 &         0.57 &          1.56 &          1.86 &           1.92 &           1.99 &          2.26 \\
    8       &          2.25 &         2.08 &         1.23 &          2.46 &          2.71 &           3.03 &           3.09 &          3.49 \\
    12      &          2.85 &         3.54 &         2.40 &          3.81 &          3.81 &           4.46 &           4.47 &          5.03 \\
    16      &          3.30 &         5.41 &         4.09 &          5.57 &          5.12 &           6.14 &           6.12 &          6.86 \\
    20      &          3.57 &         7.52 &         6.11 &          7.60 &          6.73 &           8.00 &           7.99 &          8.83 \\
    24      &          3.73 &         9.64 &         8.23 &          9.69 &          8.56 &           9.94 &           9.90 &         10.83 \\
    28      &          3.81 &        11.56 &        10.19 &         11.59 &         10.36 &          11.73 &          11.70 &         12.74 \\
    32      &          3.85 &        13.29 &        11.99 &         13.31 &         12.07 &          13.38 &          13.36 &         14.59 \\
    36      &          3.87 &        14.57 &        13.34 &         14.58 &         13.38 &          14.62 &          14.60 &         16.02 \\
    40      &          3.88 &        15.50 &        14.36 &         15.51 &         14.38 &          15.53 &          15.52 &         17.15 \\
    \bottomrule
    \end{tabular}%
    }
    \caption{The table refers to Fig.~\ref{Avg SE Casual} and shows the Average Spectral Efficiency of the different precodings in the \textit{Urban NLOS scenario}. Proposal algorithm is $\bm W_{ARZF} (\bm V)$. Optimal regularization $\bm W_\textit{OPT}(\bm V)$ was configured using the L-BFGS optimization algorithm.}
    \label{tab:Avg SE Casual}
\end{table}

\begin{table}
    \centering
    \resizebox{\columnwidth}{!}{%
    \begin{tabular}{l|r|r|r|r|r|r|r|r}
    \toprule
    Precoding &  $W_{MRT}(V)$ &  $W_{ZF}(V)$ &  $W_{ZF}(F)$ &  $W_{RZF}(V)$ &  $W_{RZF}(F)$ &  $W_{WRZF}(V)$ &  $W_{ARZF}(V)$ &  $W_\textit{OPT}(V)$ \\
    \textit{SUSINR} &               &              &              &               &               &                &                &               \\
    \midrule
    0                        &          0.42 &         0.16 &         0.24 &          0.33 &          0.33 &           0.45 &           0.09 &          0.11 \\
    4                        &          0.85 &         0.38 &         0.56 &          0.60 &          0.83 &           0.91 &           0.30 &          0.33 \\
    8                        &          1.48 &         0.86 &         1.21 &          1.08 &          1.71 &           1.60 &           0.83 &          0.91 \\
    12                       &          2.18 &         1.77 &         2.36 &          1.96 &          3.05 &           2.58 &           1.85 &          1.89 \\
    16                       &          2.67 &         3.22 &         4.00 &          3.34 &          4.70 &           3.88 &           3.40 &          3.35 \\
    20                       &          2.90 &         5.12 &         5.86 &          5.19 &          6.51 &           5.55 &           5.34 &          5.33 \\
    24                       &          2.99 &         7.25 &         7.67 &          7.28 &          8.07 &           7.49 &           7.41 &          7.38 \\
    28                       &          3.02 &         9.19 &         9.16 &          9.21 &          9.35 &           9.33 &           9.30 &          9.67 \\
    32                       &          3.04 &        10.84 &        10.31 &         10.85 &         10.40 &          10.92 &          10.91 &         11.54 \\
    36                       &          3.04 &        11.93 &        11.02 &         11.93 &         11.06 &          11.97 &          11.96 &         12.83 \\
    40                       &          3.05 &        12.61 &        11.44 &         12.62 &         11.45 &          12.63 &          12.63 &         13.74 \\
    \bottomrule
    \end{tabular}%
    }
    \caption{The table refers to Fig.~\ref{Min SE Casual} and shows the Minimal Spectral Efficiency of the different precodings in the \textit{Urban NLOS scenario}. Proposal algorithm is $\bm W_{ARZF} (\bm V)$. Optimal regularization $\bm W_\textit{OPT}(\bm V)$ was configured using the L-BFGS optimization algorithm.}
    \label{tab:Min SE Casual}
\end{table}

\begin{table}
    \centering
    \begin{tabular}{c | c}
        Parameter & Value \\
        \hline
        Base station parameters &  \\
        \hline
        number of base stations & $1$ \\
        position, m: (x, y, z) axes & $(0, 0, 25)$ \\
        number of antenna placeholders (y axis) & $8$ \\
        number of antenna placeholders (z axis) & $4$ \\
        distance between placeholders ($y$ axis) & $0.5$ wavelength \\
        distance between placeholders ($z$ axis) & $1.7$ wavelength \\
        antenna model & 3gpp-macro \\
        half-Power in azimuth direction, deg & 60 \\
        half-Power in elevation direction, deg & 10  \\
        front-to back ratio, dB  & 20 \\
        total number of antennas & 64 \\
        \hline
         Receiver parameters &  \\
        \hline
        number of placeholders at the receiver ($x$ axis) & $2$ \\
        distance between placeholders ($x$ axis) & $0.5$ wavelength \\
        antenna model & half-wave-dipole \\
        total number of antennas & 4 \\
        \hline
        Quadriga simulation parameters &  \\
        \hline
        central band frequency & $3.5$ GHz \\
        1 sample per meter (default value) & 1 \\
        include delay of the LOS path & 1 \\
        disable spherical waves (use\_3GPP\_baseline) & 1 \\
        \hline
        Quadriga channel builders parameters &  \\
        \hline
        shadow fading sigma & 0 \\
        cluster splitting & False \\
        bandwidth & $100$ MHz \\
        number of subcarriers & 42 \\ \\
    \end{tabular} 
    \caption{Quadriga generation parameters}
    \label{tab:quadriga_hyp}
\end{table}

\section*{Abbreviations}

\begin{tabular}{c|c}
    MIMO & Multiple-input multiple-output  \\
    ZF & Zero-Forcing \\
    RZF & Regularized Zero-Forcing \\
    MRT & Maximum Ratio Transmission \\
    DPC & Dirty Paper Coding \\
    UE & User equipment \\
    CEU & Cell Edge User equipment \\
    RSRP & Reference Signal Received Power \\
    VP & Vector Perturbation \\
    CD & Conjugate Detection \\
    SINR & Signal-to-Interference-and-Noise \\
    HARQ & Hybrid Automatic Repeat Request \\
    BLER & Block Error Rate \\ 
    SE & Spectral Efficiency \\
    MSE & Mean Squared Error \\
    MMSE & Minimum Mean Squared Error \\
    WRZF & Wiener Filter Zero-Forcing \\
    ARZF & Adaptive Regularized Zero-Forcing \\
    OPT & Gradient-Based Optimal Regularization \\
    LOS & Line-of-Sight \\
    NLOS & Non-Line-of-Sight \\
    SVD & Singular Value Decomposition \\
    CSI & Channel state information \\
    TDD & Time division duplex 
\end{tabular}

\bibliographystyle{unsrt}
\bibliography{interacttfssample}

\begin{thebibliography}{10}

\bibitem{EE}
Hien~Quoc Ngo, Erik~G Larsson, and Thomas~L Marzetta.
\newblock Energy and spectral efficiency of very large multiuser {MIMO}
  systems.
\newblock {\em IEEE Transactions on Communications}, 61(4):1436--1449, 2013.

\bibitem{5G}
Jeffrey~G Andrews, Stefano Buzzi, Wan Choi, Stephen~V Hanly, Angel Lozano,
  Anthony~CK Soong, and Jianzhong~Charlie Zhang.
\newblock What will {5G} be?
\newblock {\em IEEE Journal on selected areas in communications},
  32(6):1065--1082, 2014.

\bibitem{ZF_MRT}
Tebe Parfait, Yujun Kuang, and Kponyo Jerry.
\newblock Performance analysis and comparison of {ZF} and {MRT} based downlink
  massive {MIMO} systems.
\newblock In {\em 2014 sixth international conference on ubiquitous and future
  networks (ICUFN)}, pages 383--388. IEEE, 2014.

\bibitem{Joham_RZF}
Michael Joham, Wolfgang Utschick, and Josef~A Nossek.
\newblock Linear transmit processing in {MIMO} communications systems.
\newblock {\em IEEE Transactions on signal Processing}, 53(8):2700--2712, 2005.

\bibitem{Bjornson}
Emil Bj{\"o}rnson, Mats Bengtsson, and Bj{\"o}rn Ottersten.
\newblock Optimal multiuser transmit beamforming: A difficult problem with a
  simple solution structure [lecture notes].
\newblock {\em IEEE Signal Processing Magazine}, 31(4):142--148, 2014.

\bibitem{RZF19}
Long~D Nguyen, Hoang~Duong Tuan, Trung~Q Duong, and H~Vincent Poor.
\newblock Multi-user regularized zero-forcing beamforming.
\newblock {\em IEEE Transactions on Signal Processing}, 67(11):2839--2853,
  2019.

\bibitem{RZF2}
Jiankang Zhang, Sheng Chen, Robert~G Maunder, Rong Zhang, and Lajos Hanzo.
\newblock Regularized zero-forcing precoding-aided adaptive coding and
  modulation for large-scale antenna array-based air-to-air communications.
\newblock {\em IEEE Journal on Selected Areas in Communications},
  36(9):2087--2103, 2018.

\bibitem{OptimalRegularization}
Christian~B Peel, Bertrand~M Hochwald, and A~Lee Swindlehurst.
\newblock A vector-perturbation technique for near-capacity multiantenna
  multiuser communication-part i: channel inversion and regularization.
\newblock {\em IEEE Transactions on Communications}, 53(1):195--202, 2005.

\bibitem{Quadriga}
Stephan Jaeckel, Leszek Raschkowski, Kai B{\"o}rner, and Lars Thiele.
\newblock {QuaDRiGa}: A {3-D} multi-cell channel model with time evolution for
  enabling virtual field trials.
\newblock {\em IEEE Transactions on Antennas and Propagation},
  62(6):3242--3256, 2014.

\bibitem{PrecodingDetection}
Shuying Shi, Martin Schubert, and Holger Boche.
\newblock Downlink {MMSE} transceiver optimization for multiuser {MIMO}
  systems: Duality and sum-{MSE} minimization.
\newblock {\em IEEE Transactions on Signal Processing}, 55(11):5436--5446,
  2007.

\bibitem{BlockDiagonal}
Giuseppe Caire and Shlomo Shamai.
\newblock On the achievable throughput of a multiantenna {Gaussian} broadcast
  channel.
\newblock {\em IEEE Transactions on Information Theory}, 49(7):1691--1706,
  2003.

\bibitem{Bjornson_tb_17}
Emil Bj{\"o}rnson, Jakob Hoydis, and Luca Sanguinetti.
\newblock Massive {MIMO} networks: Spectral, energy, and hardware efficiency.
\newblock {\em Foundations and Trends in Signal Processing}, 11(3-4):154--655,
  2017.

\bibitem{DPC}
Le-Nam Tran, Markku Juntti, Mats Bengtsson, and Bjorn Ottersten.
\newblock Beamformer designs for {MISO} broadcast channels with zero-forcing
  dirty paper coding.
\newblock {\em IEEE transactions on wireless communications}, 12(3):1173--1185,
  2013.

\bibitem{Survey2017}
Nusrat Fatema, Guang Hua, Yong Xiang, Dezhong Peng, and Iynkaran Natgunanathan.
\newblock Massive {MIMO} linear precoding: A survey.
\newblock {\em IEEE systems journal}, 12(4):3920--3931, 2017.

\bibitem{Survey2015}
Kan Zheng, Long Zhao, Jie Mei, Bin Shao, Wei Xiang, and Lajos Hanzo.
\newblock Survey of large-scale {MIMO} systems.
\newblock {\em IEEE Communications Surveys \& Tutorials}, 17(3):1738--1760,
  2015.

\bibitem{EZF19}
Sunil Dhakal.
\newblock High rate signal processing schemes for correlated channels in {5G}
  networks.
\newblock 2019.

\bibitem{Bogale}
Tadilo~Endeshaw Bogale and Luc Vandendorpe.
\newblock Sum {MSE} optimization for downlink multiuser {MIMO} systems with per
  antenna power constraint: Downlink-uplink duality approach.
\newblock In {\em 2011 IEEE 22nd International Symposium on Personal, Indoor
  and Mobile Radio Communications}, pages 2035--2039. IEEE, 2011.

\bibitem{Inverses}
Ami Wiesel, Yonina~C Eldar, and Shlomo Shamai.
\newblock Zero-forcing precoding and generalized inverses.
\newblock {\em IEEE Transactions on Signal Processing}, 56(9):4409--4418, 2008.

\bibitem{RZF_mBS}
Jakob Hoydis, Stephan Ten~Brink, and M{\'e}rouane Debbah.
\newblock Massive {MIMO} in the {UL/DL} of cellular networks: How many antennas
  do we need?
\newblock {\em IEEE Journal on selected Areas in Communications},
  31(2):160--171, 2013.

\bibitem{SpatialCorrelation}
Emil Bj{\"o}rnson, Eduard Jorswieck, and Bjorn Ottersten.
\newblock Impact of spatial correlation and precoding design in {OSTBC} {MIMO}
  systems.
\newblock {\em IEEE Transactions on Wireless Communications}, 9(11):3578--3589,
  2010.

\bibitem{SVD}
Liang Sun and Matthew~R McKay.
\newblock Eigen-based transceivers for the {MIMO} broadcast channel with
  semi-orthogonal user selection.
\newblock {\em IEEE Transactions on Signal Processing}, 58(10):5246--5261,
  2010.

\bibitem{nguyen2014mmse}
Duy~HN Nguyen and Tho Le-Ngoc.
\newblock {MMSE} precoding for multiuser {MISO} downlink transmission with
  non-homogeneous user {SNR} conditions.
\newblock {\em EURASIP Journal on Advances in Signal Processing},
  2014(1):1--12, 2014.

\bibitem{Tse_tb_05}
David Tse and Pramod Viswanath.
\newblock {\em Fundamentals of wireless communication}.
\newblock Cambridge university press, 2005.

\bibitem{RankSelection}
Nurul~H Mahmood, Gilberto Berardinelli, Fernando~ML Tavares, Mads Lauridsen,
  Preben Mogensen, and Kari Pajukoski.
\newblock An efficient rank adaptation algorithm for cellular {MIMO} systems
  with {IRC} receivers.
\newblock In {\em 2014 IEEE 79th Vehicular Technology Conference (VTC Spring)},
  pages 1--5. IEEE, 2014.

\bibitem{bobrov2023power}
Evgeny Bobrov, Boris Chinyaev, Viktor Kuznetsov, Dmitrii Minenkov, and Daniil
  Yudakov.
\newblock Power allocation algorithms for massive mimo systems with
  multi-antenna users.
\newblock {\em Wireless Networks}, pages 1--22, 2023.

\bibitem{IRC}
Bin Ren, Yingmin Wang, Shaohui Sun, Yawen Zhang, Xiaoming Dai, and Kai Niu.
\newblock Low-complexity {MMSE-IRC} algorithm for uplink massive {MIMO}
  systems.
\newblock {\em Electronics Letters}, 53(14):972--974, 2017.

\bibitem{MMSE}
Ahmed~Hesham Mehana and Aria Nosratinia.
\newblock Diversity of mmse mimo receivers.
\newblock {\em IEEE Transactions on information theory}, 58(11):6788--6805,
  2012.

\bibitem{effectiveSINR}
Sandra Lagen, Kevin Wanuga, Hussain Elkotby, Sanjay Goyal, Natale Patriciello,
  and Lorenza Giupponi.
\newblock New radio physical layer abstraction for system-level simulations of
  {5G} networks.
\newblock In {\em ICC 2020-2020 IEEE International Conference on Communications
  (ICC)}, pages 1--7. IEEE, 2020.

\bibitem{Pareto1}
Harold~W Kuhn and Albert~W Tucker.
\newblock Nonlinear programming.
\newblock In {\em Traces and emergence of nonlinear programming}, pages
  247--258. Springer, 2014.

\bibitem{Pareto2}
Tjalling Koopmans.
\newblock Activity analysis of production and allocation.
\newblock 1951.

\bibitem{Scheduler}
Le-Nam Tran.
\newblock An iterative precoder design for successive zero-forcing precoded
  systems.
\newblock {\em IEEE communications letters}, 16(1):16--18, 2011.

\bibitem{PCA}
Karl Pearson.
\newblock {LIII}. on lines and planes of closest fit to systems of points in
  space.
\newblock {\em The London, Edinburgh, and Dublin philosophical magazine and
  journal of science}, 2(11):559--572, 1901.

\bibitem{Neumann}
Dengkui Zhu, Boyu Li, and Ping Liang.
\newblock On the matrix inversion approximation based on neumann series in
  massive mimo systems.
\newblock In {\em 2015 IEEE international conference on communications (ICC)},
  pages 1763--1769. IEEE, 2015.

\bibitem{L-BFGS}
Dong~C Liu and Jorge Nocedal.
\newblock On the limited memory {BFGS} method for large scale optimization.
\newblock {\em Mathematical programming}, 45(1):503--528, 1989.

\bibitem{QNS}
Evgeny Bobrov, Dmitry Kropotov, Sergey Troshin, and Danila Zaev.
\newblock Finding better precoding in massive {MIMO} using optimization
  approach, 2021.

\bibitem{NLOS}
Frode Bohagen, P{\aa}l Orten, and GE~Oien.
\newblock Construction and capacity analysis of high-rank line-of-sight {MIMO}
  channels.
\newblock In {\em IEEE Wireless Communications and Networking Conference,
  2005}, volume~1, pages 432--437. IEEE, 2005.

\end{thebibliography}

\end{document}